\def\BibTeX{{\rm B\kern-.05em{\sc i\kern-.025em b}\kern-.08emT\kern-.1667em\lower.7ex\hbox{E}\kern-.125emX}}
\theoremstyle{remark}
\newtheorem*{remark}{Remark}
\DeclareMathOperator{\Tr}{tr}
\DeclareMathOperator{\Rank}{Rank}
\def \bA {\mathbf{A}}
\def \bbE {\mathbb{E}}
\def \bbS {\mathbb{S}}
\def \bbB {\mathbb{B}}
\begin{document}

\title{Variance Reduction in Gradient Exploration for Online Learning to Rank }

\author{Huazheng Wang, Sonwoo Kim, Eric McCord-Snook, Qingyun Wu, Hongning Wang}
\affiliation{%
  \institution{ Department of Computer Science, University of Virginia }
  \streetaddress{85 Engineer's Way}
  \city{Charlottesville} 
  \state{VA 22904, USA} 
}
\email{{hw7ww, sak2km, esm7ky, qw2ky, hw5x}@virginia.edu}

%
\renewcommand{\shortauthors}{Wang et al.}

\begin{abstract}
Online Learning to Rank (OL2R) algorithms learn from implicit user feedback on the fly. The key to such algorithms is an unbiased estimate of gradient, which is often (trivially) achieved by uniformly sampling from the entire parameter space. Unfortunately, this leads to \emph{high-variance} in gradient estimation, resulting in high regret during model updates, especially when the dimension of the parameter space is large. 



In this work, we focus on Dueling Bandit Gradient Descent (DBGD) based OL2R algorithms, which constitute a major endeavor in this direction of research. In particular, we aim at reducing the variance of gradient estimation in DBGD-type OL2R algorithms. 
We project the selected updating direction (i.e., the winning direction) into a space spanned by the feature vectors from examined documents under the current query (termed the ``document space'' for short), \emph{after} an interleaved test. Our key insight is that the result of an interleaved test is solely governed by a user's relevance evaluation over the \emph{examined} documents. Hence, the true gradient introduced by this test is only reflected in the constructed document space, and components of the proposed gradient which are orthogonal to the document space can be safely removed, for variance reduction purpose. We prove that this projected gradient is still an unbiased estimation of the true gradient, and show that this lower-variance gradient estimation results in significant regret reduction. Our proposed method is compatible with all existing DBGD-type OL2R algorithms which rank documents using a linear model. Extensive experimental comparisons with several best-performing DBGD-type OL2R algorithms have confirmed the effectiveness of our proposed method in reducing the variance of gradient estimation and improving overall ranking performance.
\end{abstract}

%
%
\begin{CCSXML}
<ccs2012>
<concept>
<concept_id>10002951.10003317.10003338.10003343</concept_id>
<concept_desc>Information systems~Learning to rank</concept_desc>
<concept_significance>500</concept_significance>
</concept>
<concept>
<concept_id>10003752.10003809.10010047.10010048</concept_id>
<concept_desc>Theory of computation~Online learning algorithms</concept_desc>
<concept_significance>500</concept_significance>
</concept>
</ccs2012>
\end{CCSXML}

\ccsdesc[500]{Information systems~Learning to rank}
\ccsdesc[500]{Theory of computation~Online learning algorithms}

\keywords{Online learning to rank; Dueling bandit; Variance Reduction}

\maketitle

\section{Introduction}

Online Learning to Rank (OL2R) \cite{grotov2016online} is a family of online learning solutions, which exploit implicit feedback from users to directly optimize parameterized rankers on the fly. It has drawn increasing attention in research community in recent years due to its advantages over classical offline learning to rank algorithms \cite{liu2009learning}. First, it avoids the expensive and time consuming process of offline result relevance annotation. Second, as it directly learns from user feedback, it optimizes the ranking results to best reflect current user preferences \cite{radlinski2008does}. Third, because the model is updated on the fly, there is no need to store user click history offline, which alleviates many privacy concerns \cite{wang2016learning}.
 
One strain of OL2R algorithms, represented by Dueling Bandit Gradient Descent (DBGD) \cite{yue2009interactively}, optimize a linear scoring function by exploring the parameter space via interleaved test. Algorithms of this type first propose an exploratory direction as a tentative model update direction, and then update the current ranker if the proposed direction provides better ranking utility. In practice, result utility is usually inferred from user clicks on an interleaved list of ranking results from each ranker \cite{yue2010learning}. The key technical insight of DBGD-type algorithms is that the expectation of selected directions is an \emph{unbiased} estimate of true gradient of the unknown loss function for ranking \cite{flaxman2005online}. As a result, DBGD is essentially a stochastic online gradient descent algorithm. However, because the exploration directions are uniformly sampled from the entire parameter space, when the dimensionality of the space is high (which is usually the case in practice), the \emph{variance} in gradient estimation becomes large. This directly slows down the learning convergence of the algorithm and inevitably increases sample complexity.

Recently, several follow-up works have realized this deficiency of gradient exploration in DBGD, and propose various types of solutions to improve its learning efficiency. One type of studies explore multiple random directions in each iteration of model update. Unbiased estimate of gradient is maintained in this type of revisions of DBGD, as the directions are still uniformly sampled. Model estimation variance is expected to be reduced by testing more exploratory directions; but, in practice, as the users would only examine a finite number of documents under each query (e.g., due to position bias \cite{joachims2017accurately}), the sensitivity of interleaved test drops as a result of more exploratory rankers having to be tested at once. This unfortunately introduces additional variance in model estimation. Another type of research constrains the sampling space for gradient exploration \cite{hofmann2013reusing,oosterhuis2017balancing,wang2018efficient}. However, this line of solutions cannot guarantee the estimated gradient remains unbiased, and thus face high risk of converging towards a sub-optimal solution. 

Although empirically effective, previous OL2R solutions neglect an important property of click-based result utility evaluation: users only perceive utility from the documents that they actually examine. As a result, the \emph{true} gradient is only revealed by features playing an essential role in ranking those examined documents under this query. Here we define essential features in ranking a particular set of documents as those features with non-zero variance among the documents. Assume in an interleaved test, one ranking feature takes a constant value in all examined documents under this query, such that it has no effect in differentiating the quality of those documents. Then, the proposed exploratory direction's contribution to the ranker update on this particular dimension cannot be justified by this test result. Random gradient exploration hence introduces an arbitrary update on this dimension, which inevitably leads to high estimation variance over time. This example can be generalized to situations where multiple (even correlated) features have no effect in differentiating the utility of examined documents in the result of an interleaved test. Because in practice users usually only examine a handful of documents under each query \cite{joachims2017accurately,craswell2008experimental}, but each document consists of hundreds or even thousands of ranking features, the variance introduced by random exploration on those non-essential features could be considerably large.      

The above analysis suggests that an interleaved test only reveals the projection of true gradient in the spanned space of examined documents under a test query (termed the ``document space'' in this paper). With this as our motivation, we decide to project the winning direction back into the document space so as to reduce the variance introduced by random gradient exploration. We construct the document space from inferred users' result examinations \cite{craswell2008experimental}, which are not observable in the user response but can be statistically modeled. Because this projection is independent from how the proposal directions are created, this solution can be directly applied to any DBGD-type OL2R algorithm. We theoretically prove that the projected direction is still an unbiased estimate of the true gradient, i.e., model convergence is guaranteed, and also prove the reduced variance directly leads to considerable regret reduction in online model update. We compare the proposed method with several best-performing DBGD-type OL2R algorithms on a collection of large-scale learning to rank datasets and confirmed the effectiveness of our proposed solution.
\section{Related Work}

One key family of OL2R methods root in Dueling Bandit Gradient Descent (DBGD) \cite{yue2009interactively}, which uses online gradient descent to solve a bandit convex optimization problem \cite{flaxman2005online}. In each iteration, DBGD uniformly samples a random direction from the entire parameter space to create an exploratory ranker, and uses an interleaved test \cite{radlinski2008does} to compare the current ranker with the exploratory one. If the exploratory ranker is preferred, the proposed direction is used as the gradient to update the model. This procedure yields an unbiased estimate of true gradient \cite{yue2012k}. However, the variance of DBGD's gradient estimation is high due to the nature of uniform exploration of the entire parameter space, which limits its learning efficiency.

Recently, attempts have been made to improve the learning efficiency of DBGD-type algorithms. Schuth et al. \cite{schuth2016multileave} proposed a Multileave Gradient Descent (MGD) algorithm to explore multiple stochastic directions in each iteration with multi-interleaving comparison \cite{schuth2014multileaved}. Zhao and King \cite{zhao2016constructing} developed a Dual-Point Dueling Bandit Gradient Descent algorithm to sample two stochastic vectors with opposite directions as the candidate gradients. The basic idea of this line of solutions is to test more exploratory directions at once so as to obtain the true gradient estimate sooner. However, their gradient exploration is still within the entire feature space. As users often only examine a small number of documents under each query, the sensitivity of interleaved test drops due to more exploratory rankers need to be tested. In a different direction of solutions, researchers proposed to constrain the sampling space for gradient exploration. Hofmann et al. chose to filter the stochastic directions by historical comparisons before an interleaved test \cite{hofmann2013reusing}. Oosterhuis et. al \cite{oosterhuis2017balancing} proposed to explore gradients in a subspace constructed by a set of pre-selected reference documents from an offline training corpus. Wang et al. \cite{wang2018efficient} proposed to use historical interactions to avoid repeatedly exploring less promising directions, which also reduces gradient exploration to a subspace. However, the variance of gradient exploration is reduced at a cost of introducing bias into gradient approximation, so that such algorithms have a risk of converging to sub-optimal results. 

Our solution falls into this second category of variance reduction approaches for DBGD-type algorithms. Distinct from previous attempts to restrict gradient exploration before an interleaved test, we instead modify the selected direction after the test. As users' result examination is affected by the ranked results, which are in turn determined by the proposed exploratory directions, restricting the exploration space before the interleaved test potentially introduces bias in the subsequent interleaved test and model update. Our solution is based on the insight that only the projected true gradient in the document space can be revealed by an interleaved test. Hence, we decide to project the selected direction after each interleaved test, and thus guarantee an unbiased estimate of true gradient. Since the document space is expected to be smaller than the entire parameter space (as it is constructed only by the examined documents), the projected gradient enjoys low variance and leads to faster model convergence in online update. 

There are also other parallel lines of OL2R algorithms that do not explore the gradient space the way DBGD-type algorithms do, but directly optimize the ranking model from click feedback. Kveton et al. \cite{kveton2015cascading} proposed Cascading Bandits to learn from users' click behaviour, where skipped documents are assumed to be less attractive than later clicked ones. This model is then extended to the dependent click model \cite{katariya2016dcm} to support multiple clicks in one query, and further studied for general stochastic click models \cite{zoghi2017online}. However, these algorithms estimate a separate model for each query and do not share estimation across queries, which lead to slow convergence. Oosterhuis et al. \cite{Oosterhuis_2018} proposed a Pairwise Differentiable Gradient Descent (PDGD) algorithm that constructs gradients from pairwise result comparisons to update the model, and can be used to optimize neural network models. We should note that our solution is not compatible nor directly comparable with these non-DBGD algorithms, as there is no gradient exploration in these algorithms and our proposed gradient project does not apply. 
\section{Method}

In this section we describe our proposed document space gradient projection method for online learning to rank.
We first describe the problem setup in Section \ref{sec:setup}. Then we present Document Space Projected Dueling Bandit Gradient Descent (DBGD-DSP) algorithm as an example of our proposed general solution in Section \ref{sec:alg}. Our gradient projection method is independent from how the exploratory gradient is proposed, and thus can be directly applied to any existing DBGD-type OL2R algorithm \footnote{In the following discussions, we will use "DBGD-type OL2R algorithm" and "OL2R algorithm" interchangeably, as the focus of this paper is improving the efficiency of DBGD-type OL2R algorithms.
} to reduce its variance of gradient estimation.  We rigorously prove the unbiasedness of our gradient estimation in Section \ref{sec:unbias} and analyze the regret of DBGD-DSP in Section \ref{sec:regret}. The same procedure and conclusions can be applied to any DBGD-type algorithm of interest. 

\subsection{Problem Setup}\label{sec:setup}
The estimation of OL2R models can be formalized as a dueling bandit problem \cite{yue2009interactively}. In iteration $t$, an OL2R algorithm receives a query and associated candidate documents, which are represented as a set of $d$-dimensional query-document pair feature vectors $X_t = \{x_1, x_2, ..., x_s\}$. The algorithm takes two actions: first, it proposes two rankers, whose parameters are denoted as $w, w'$; second, it ranks the given documents with these two rankers accordingly. An oracle (i.e., user) compares (duels) the two rankers' results and provides feedback. In practice, an interleaving method \cite{radlinski2008does} is applied to merge the ranking lists of the two rankers and display the resulting ranked list to the user. User preference is inferred from the click feedback. Thus, the ranker that contributes more clicked documents is preferred. We denote $w \succ w'$ for the event that $w$ is preferred over $w'$. The comparison between two individual rankers is determined independently of other comparisons performed before with a probability $P\left(w \succ w' | X_t\right)$, such that $P\left(w \succ w' | X_t\right) = P_t\left(w \succ w' \right)  = f_t(w, w')$. 
$f_t(w, w')$ can be viewed as the distinguishability of the two rankers $w$ and $w'$ by an interleave comparison under query $X_t$.

We quantify the performance of an online learning algorithm using cumulative regret defined as follows:
\begin{equation}\label{eq:regret}
    R(T) = \sum_{t=1}^T f_t(w^*, w_t) + f_t(w^*, w'_t),
\end{equation}
where $w_t$ and $w'_t$ are rankers compared at time $t$, and $w^*$ is the best ranker in ground-truth. As a result, the distinguishability measure $f_t(w^*, w)$ indicates the loss of proposing a sub-optimal ranker $w$. We denote $f_t(w_t, w)$ as $f_t(w)$ for simplicity. The goal of an OL2R algorithm is to optimize its parameter towards $w^*$ according to loss $f_t(w)$.  A desired OL2R algorithm should have a sublinear regret in a finitie time horizon $T$, so that the one-step regret is quickly decreasing to zero over time.

In this work, we make the following assumptions similar to \cite{yue2009interactively}. We assume an unknown utility function $v_t(w)$ that quantifies the quality of a ranker $w$ over query $X_t$. 
The utility function $v_t$ is assumed to be  differentiable, strongly concave and $L_v$-Lipschitz, which means $\lvert v_t(x)-v_t(y) \rvert\leq L_v \lvert x-y\rvert$. 

A link function $\sigma$ describes the probabilistic comparison of utilities of two rankers as,
\begin{equation*}
    P_t\left(w \succ w'\right) = f_t(w, w') = \sigma\left(v_t(w)-v_t(w')\right).
\end{equation*}
The link function should be rotation-symmetric, which means $\sigma(x)=1-\sigma(-x)$. We assume the link function is $L_\sigma$-Lipschitz and second order $L_2$-Lipschitz. The link function behaves like a cumulative probability distribution function.
For example, a common choice of link function is the standard logistic function $\sigma(x) = \frac{1}{1+\exp(-x)}$ , which satisfies all the assumptions.  

\subsection{Document Space Projected Dueling Bandit Gradient Descent}\label{sec:alg}

\begin{algorithm}[t]
\caption{Document Space Projected Dueling Bandit Gradient Descent (DBGD-DSP)} \label{alg}
\begin{algorithmic}[1]
\State \textbf{Inputs: } $\delta, \alpha$
\State Initiate $w_1 = \textit{sample\_unit\_vector}()$
\For{ $t=1$ to $T$}	
\State Receive query $X_t = \{x_1, x_2, ..., x_s\}$
\State $u_t = \textit{sample\_unit\_vector}() $
\State $w_t' = w_t + \delta u_t$
\State Generate ranked lists $l(X_t, w_t), l(X_t, w_t')$

\State Set $L_t=\text{Interleave}\big(\{l(X_t, w_t), l(X_t, w_t')\}\big)$, and present $L_t$ to user
\State Receive click positions $C_t$ on $L_t$, and infer click credits ${\{c_t, c_t'\}}$
\If {  $c_t \geq c_t'$ } 
\State $w_{t+1} = w_t$
\Else
\State Based on $C_t$, infer user examined top $m_t$ documents in $L_t$.
\State Solve the orthogonal projection matrix $\bA_t$ for document space $S_t = span(\{x_{L_t, 1}, x_{L_t, 2}, ..., x_{L_t, m_t}\})$. 
\State Project $u_t$ onto $S_t$ by $g_t = \bA_t u_t$
\State $w_{t+1} = w_t +\alpha g_t$
\EndIf
\EndFor
\end{algorithmic}
\end{algorithm}

We describe our proposed Document Space Projected Dueling Bandit Gradient Descent (DBGD-DSP) in Algorithm \ref{alg}. We should note it fits all DBGD-type OL2R algorithm settings. At the beginning of iteration $t$, user initiates a query $X_t$. We denote $w_t$ as the parameter of the current ranker. DBGD-DSP first uniformly samples a vector $u_t$ from $d$ dimensional unit sphere $\bbS^{d-1}$ (i.e., $|u_t|_2=1$) as an exploratory direction, and proposes a candidate ranker $w_t' = w_t + \delta u_t$, where $\delta$ is the step size of exploration.
The algorithm then uses the two rankers ($w_t$ and $w_t'$) to generate ranking lists $l(X_t, w_t)$ and $l(X_t, w_t')$ accordingly, and combines them with an interleaving method, such as Team Draft Interleaving \cite{radlinski2008does} or Probabilistic Interleaving \cite{hofmann2011probabilistic}. The user examines the result list and provides implicit click feedback to indicate their relevance evaluation of the results. The interleaving method uses this implicit feedback to infer which ranker is preferred by the user. 
If the exploratory ranker is preferred (i.e., wins the duel), previous DBGD-style algorithms update the current ranker by $w_{t+1} = w_t + \alpha u_t$, where $\alpha$ is the learning rate; otherwise the current ranker stays intact. This gradient exploration strategy yields an unbiased estimate of the true gradient \cite{flaxman2005online}, in terms of expectation. 

However, since the exploratory gradient $u_t$ is required to be uniformly sampled from the entire $d$ dimensional unit sphere $\bbS^{d-1}$, the model update suffers from high variance in its gradient estimation, especially when $d$ is large, as in practice. Various improvements to this issue have been proposed in the past, but they still introduce other difficulties, such as variance and bias trade-off \cite{oosterhuis2017balancing,wang2018efficient,hofmann2013reusing}, and test sensitivity and efficiency \cite{schuth2014multileaved,zhao2016constructing}.

\begin{figure}[t]
\centering
\setlength\tabcolsep{1pt}
\begin{tabular}{c}
\hspace*{0cm}
\includegraphics[width=8cm]{./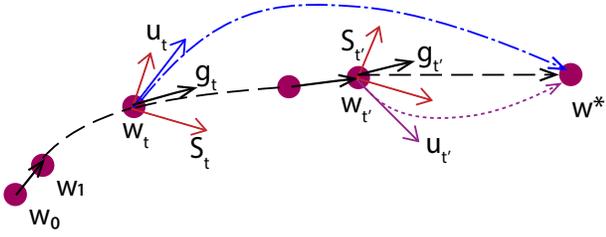} \\
\end{tabular}
\vspace{-2mm}
\caption{Illustration of model update for DBGD-DSP in a three dimensional space. Dashed lines represent the trajectory of DBGD following different update directions. $u_t$ is the selected direction by DBGD, which is in the 3-d space. Red bases present the document space $S_t$ on a 2-d plane. $u_t$ is projected onto $S_t$ to become $g_t$ for model update.} \label{Fig:procedure}
\vspace{-2mm}
\end{figure}

Unlike previous works that reduce the sampling space of gradient exploration before the interleaved test \cite{oosterhuis2017balancing,wang2018efficient,hofmann2013reusing}, we change the winning direction after the test. The key insight is that only the projected true gradient in the spanned space of \emph{examined} documents under query $X_t$ (denoted as document space $S_t$) can be revealed by an interleaved test. For example, as shown in Figure \ref{Fig:procedure}, a DBGD-style algorithm is comparing the current ranker $w_t$ and $ w_t' =w_t + \delta u_t$ with a uniformly sampled exploration direction $u_t$. The user examines top $m$ documents, e.g., $\{x_1, .. x_m\}$, of the interleaved ranking list (of course $m$ is unknown to the algorithm) and $w_t'$ wins the duel. The estimated gradient $u_t$ can therefore be separated into two components, one component $g_t$ that belongs to the document space $S_t = span\{x_1, .. x_m\}$ and the other component $u_t - g_t$ that is orthogonal to document space $S_t$. The orthogonal component $u_t - g_t$ does not affect the ranking among the examined documents, i.e. $(w_t+\delta u_t)^T x_i = (w_t+\delta g_t)^T x_i$, and thus does not contribute to the loss function and true gradient estimation. Intuitively, $u_t - g_t$ is not supported by the observed interleaved test, as anything sampled from the complement of $S_t$ cannot be verified by the examined documents. As a result, it is safe to exclude the direction $u_t - g_t$ from model update, which we later prove maintains the unbiasedness of the original DBGD-type gradient estimation, and reduces the variance. As illustrated in Figure \ref{Fig:procedure}, although $u_t$ will eventually lead to the same model estimation, as it is unbiased, this guarantee is only obtained in expectation. The variance could potentially be large: for example, the blue and purple updating traces slow down model convergence, when the number of observations is finite. 

As shown in line 14 to 16 of Algorithm \ref{alg}, we solve for the orthogonal projection matrix $\bA_t$ of document space $S_t$, and project the selected direction $u_t$ onto the document space $S_t$ after each interleaved test. 
We leave the detailed design of constructing document space and solving projection matrix $\bA_t$ in Section \ref{sec:practice}. 
Before that, we first rigorously prove the projection maintains an unbiased estimate of true gradient in Section \ref{sec:unbias}. Since the document space is constructed only by the examined documents, the rank of document space is expected to be smaller than the entire parameter space. This directly leads to lower variance and faster model convergence. We show that our document space projection reduces the variance of gradient estimation from $d$ to $Rank(\bA_t)$ in Section \ref{sec:regret}, and then analyze its benefit for regret reduction from a low-variance gradient estimation. 

\subsection{Unbiasedness of Gradient Estimation}\label{sec:unbias}
We now prove that our document space projected gradient is an unbiased estimate of true gradient in the sense of expectation \cite{yue2009interactively}. We define $Z_t(w)$ as the event of $w$ winning the duel with $w_t$,
\[Z_t(w) = \left\{
	\begin{array}{ll}
		1 & \mbox{w.p. }~ 1-P_t(w_t\succ w)  \\
		0 & \mbox{w.p. }~ P_t(w_t\succ w)
	\end{array}
\right. \]
Then the gradient used for model update in DBGD-DSP (as described in Algorithm \ref{alg}) can be described as, 
\begin{equation}
    h_t = -Z_t(w_t + \delta u_t)g_t.
\end{equation}
Note that by adding a negative sign we view our model update as online gradient descent $w_{t+1} = w_t - \alpha g_t$. 

We now show in the following theorem that this is an unbiased gradient estimation of true gradient. By defining a smoothed version of $f_t$ as $\hat f_t(w) = \bbE_{u \in \bbB}[f_t(w+\delta u)]$, we have:
\begin{theorem}\label{theorem:unbias} 
The projected gradient $g_t$ in DBGD-DSP  is an unbiased estimate of true gradient, i.e.,
\begin{equation}
    \bbE[h_t] = \frac{\delta}{d}\nabla \hat f_t(w)
\end{equation}
over random unit vector $u_t$.
\end{theorem}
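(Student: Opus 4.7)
The plan is to reduce the computation to the one-point gradient estimator of Flaxman et al., which already underlies the unbiasedness of the original DBGD update. Concretely, the key lemma to import is
\begin{equation*}
\bbE_{u \in \bbS^{d-1}}\bigl[f_t(w + \delta u)\, u\bigr] = \frac{\delta}{d} \nabla \hat f_t(w),
\end{equation*}
so the entire task becomes showing that replacing $u_t$ by its projection $\bA_t u_t$ does not change the expected value of $-Z_t(w_t + \delta u_t)$ times this direction.

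First I would marginalize out the click and interleaving noise that determine $Z_t$. Since $\bbE[Z_t(w_t + \delta u_t) \mid u_t] = 1 - f_t(w_t + \delta u_t)$, the target unfolds as
\begin{equation*}
\bbE[h_t] = -\bbE[\bA_t u_t] + \bbE\bigl[f_t(w_t + \delta u_t)\,\bA_t u_t\bigr].
\end{equation*}
I would then write $\bA_t u_t = u_t - (\bbI - \bA_t) u_t$ and show that both ``residual'' terms $\bbE[(\bbI - \bA_t) u_t]$ and $\bbE[f_t(w_t + \delta u_t) (\bbI - \bA_t) u_t]$ vanish. If so, we are left with $\bbE[h_t] = -\bbE[u_t] + \bbE[f_t(w_t + \delta u_t) u_t] = \tfrac{\delta}{d} \nabla \hat f_t(w_t)$, since $\bbE[u_t] = 0$ on the sphere and the second term is exactly what Flaxman's identity evaluates.

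The symmetry I would exploit is this: once we condition on the realized document subspace $S_t$ (equivalently, on $\bA_t$) and on the in-plane projection $\bA_t u_t$, the out-of-plane component $(\bbI - \bA_t) u_t$ is uniform on a sphere of radius $\sqrt{1 - \|\bA_t u_t\|_2^2}$ inside $S_t^\perp$, and therefore has conditional mean zero. Moreover, because an interleaved test can distinguish $w_t$ from $w_t + \delta u_t$ only through their scores on the examined documents, all of which lie in $S_t$, the probability $f_t(w_t + \delta u_t)$ that the exploratory ranker wins depends on $u_t$ only through $\bA_t u_t$; under the above conditioning it is therefore constant, and both residual expectations factor through the zero conditional mean of $(\bbI - \bA_t) u_t$.

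The main obstacle is the apparent circularity in the dependence of $\bA_t$ on $u_t$: $S_t$ is defined from the top $m_t$ documents of the interleaved list, which themselves depend on $w_t + \delta u_t$. To close this gap I would prove a separate measurability lemma asserting that the identity of the top $m_t$ examined documents, the click process, and hence $\bA_t$ are all measurable with respect to $\bA_t u_t$ together with the independent interleaving and click noise, because every score that enters these processes is a score of a document in $S_t$ and thus insensitive to the component of $u_t$ orthogonal to $S_t$. Once that measurability statement is established, the conditioning argument above is rigorous, and chaining it with Flaxman's identity yields $\bbE[h_t] = \tfrac{\delta}{d} \nabla \hat f_t(w_t)$, as claimed.
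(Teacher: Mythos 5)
Your argument is correct and reaches the stated conclusion, but by a genuinely different route than the paper. The paper defines the composed function $F_t(w)=f_t(\bA_t w)$, applies the sphere-to-ball Stokes identity to $F_t$ at the point $\bA_t^{-1}w_t$, and then uses the chain rule $\nabla \hat F_t(\bA_t^{-1}w_t)=\bA_t\nabla\hat f_t(w_t)$ together with the fact that $\nabla\hat f_t(w_t)\in S_t$ to drop the final $\bA_t$. You instead never touch a projected Stokes identity: you write $\bA_t u_t = u_t-(\bbI-\bA_t)u_t$, kill both residual terms by conditioning on $\bA_t u_t$ and using that the orthogonal component of a uniform spherical vector has conditional mean zero while $f_t(w_t+\delta u_t)$ is a function of $\bA_t u_t$ alone, and then invoke the original unprojected Flaxman/Yue identity for $f_t$. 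Both proofs ultimately rest on the same structural fact --- that $f_t$ is insensitive to perturbations orthogonal to $S_t$ (equivalently $\nabla f_t\in S_t$); the paper uses it twice (once implicitly in its first equality, once explicitly in its last), you use it once and explicitly. What your route buys: it avoids the paper's $\bA_t^{-1}$ notation, which is an abuse since a rank-deficient orthogonal projection is not invertible, and it requires only an elementary conditional-symmetry fact rather than re-deriving a Stokes-type identity for the composed function. What the paper's route buys: the constant $\delta/d$ (with the full ambient dimension $d$ rather than $\Rank(\bA_t)$) falls out mechanically from the standard lemma applied in $\bbR^d$, with no need to reason about conditional distributions on the sphere.

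One caveat on your measurability lemma: as stated it overreaches, because the composition of the interleaved list $L_t$ --- and hence which documents end up among the top $m_t$ examined ones --- is determined by scores of \emph{all} candidate documents, not only those that eventually span $S_t$, so $\bA_t$ is not in general a function of $\bA_t u_t$ and the independent noise alone. This circularity is a real issue, but it is one the paper sidesteps entirely by assuming the examined set (and therefore $\bA_t$) is given exogenously in the theoretical analysis; under that same idealization your conditioning argument is rigorous and the lemma is unnecessary.
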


\begin{proof}
Based on the Lemma 1 of \cite{yue2009interactively}, we have
\begin{align*}
\bbE\left[h_t\right] &= \bbE\left[-Z_t(w_t + \delta u_t)\bA_t u_t\right] =\bbE_{u_t\in \bbS^{d-1}}\left[f_t(w+\delta \bA_t u_t)u_t\right]
\end{align*}
Define $F_t(w) = f_t(\bA_t w)$, we have
\begin{align*}
\bbE[h_t] &= \bbE_{u_t\in \bbS^{d-1}}\left[f_t(w_t+\delta \bA_t u_t)u_t\right]\\
&= \bbE_{u_t\in \bbS^{d-1}}[F_t(\bA_t^{-1}w_t+\delta u_t)u_t]\\
&= \frac{\delta}{d} \nabla \bbE_{u_t\in \bbB^d}[F_t(\bA_t^{-1}w_t+\delta u_t)u_t]\\
&= \frac{\delta}{d} \nabla \hat F_t(\bA_t^{-1}w_t)\\
&= \frac{\delta}{d} \bA_t \nabla \hat f_t(w_t)\\
&= \frac{\delta}{d} \nabla \hat f_t(w_t)\end{align*}
where the third equality is based on Stokes' Theorem. The last equality holds because gradient $\nabla \hat f_t(w_t)$ belongs to document space $S_t$, and thus projecting it by $\bA_t$ maps back to itself.
\end{proof}
The guarantee of unbiased gradient estimation is a major advantage of our proposed document space gradient projection method, compared with previous attempts to reduce the gradient exploration space, such as Oosterhuis et. al \cite{oosterhuis2017balancing} and Wang et al. \cite{wang2018efficient}. Our method enjoys reduced variance of gradient estimate (which will be proved next), without the risk of converging towards a sub-optimal solution. We should note that the above is independent from the mechanism of how the proposal directions are generated, as shown in the first four steps of proof above. As a result, if the input direction to our projection procedure is unbiased, the resulting update direction is also unbiased. This enables our solution's generalization to other types of DBGD algorithms. 


\subsection{Regret Analysis of DBGD-DSP} \label{sec:regret}
We now analyze the regret of our proposed DBGD-DSP algorithm, starting with its variance of gradient update.

\begin{lemma}\label{lemma:variance} 
The variance of gradient update in DBGD-DSP is bounded by 
\begin{equation*}
\bbE[|h_t|^2] = \bbE_{u_t \in \bbS^{d-1}}\left[|-Z_t(w_t + \delta u_t)\bA_t u_t|^2\right] \leq \frac{Rank(\bA_t)}{d}.  
\end{equation*}
\end{lemma}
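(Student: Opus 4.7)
The plan is to bound $\mathbb{E}[|h_t|^2]$ by exploiting three elementary facts: $Z_t$ is a Bernoulli random variable, so $Z_t^2 \le 1$ pointwise; $\bA_t$ is an orthogonal projection, so $\bA_t^\top \bA_t = \bA_t$ and $\bA_t$ is idempotent; and $u_t$ is uniform on $\mathbb{S}^{d-1}$, so its covariance is $\frac{1}{d} I$.

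First, I would drop the Bernoulli factor by observing that $Z_t(w_t + \delta u_t)^2 \le 1$ almost surely, giving
\begin{equation*}
\mathbb{E}[|h_t|^2] \;\le\; \mathbb{E}_{u_t \in \mathbb{S}^{d-1}}\bigl[|\bA_t u_t|^2\bigr].
\end{equation*}
Next, I would rewrite the squared norm using the projection identity $|\bA_t u_t|^2 = u_t^\top \bA_t^\top \bA_t u_t = u_t^\top \bA_t u_t$, which holds since $\bA_t$ is the orthogonal projector onto $S_t$.

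The third step is to move the expectation inside the quadratic form via trace cyclicity:
\begin{equation*}
\mathbb{E}_{u_t}[u_t^\top \bA_t u_t] \;=\; \Tr\!\bigl(\bA_t \,\mathbb{E}_{u_t}[u_t u_t^\top]\bigr).
\end{equation*}
For $u_t$ uniformly distributed on $\mathbb{S}^{d-1}$, the second-moment matrix is $\mathbb{E}[u_t u_t^\top] = \frac{1}{d} I$ by spherical symmetry (all directions equivalent, and the trace of $\mathbb{E}[u_t u_t^\top]$ equals $\mathbb{E}[|u_t|^2] = 1$). Plugging this in gives $\frac{1}{d}\Tr(\bA_t)$, and since the trace of an orthogonal projection equals its rank, the bound $\frac{\Rank(\bA_t)}{d}$ follows.

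I do not anticipate a real obstacle; the argument is a clean three-line manipulation once the three ingredients above are lined up. The only subtlety worth flagging explicitly in the write-up is justifying $\mathbb{E}[u_t u_t^\top] = \frac{1}{d} I$ for uniform sphere samples (rotation invariance pins down the covariance up to a scalar, and the trace constraint $|u_t|=1$ fixes that scalar to $\frac{1}{d}$), and making sure the projection identities $\bA_t^\top = \bA_t$ and $\bA_t^2 = \bA_t$ are invoked when collapsing $\bA_t^\top \bA_t$ to $\bA_t$. As a sanity check, when $\bA_t = I$ (i.e., the document space is the full $\mathbb{R}^d$), the bound recovers the trivial $\mathbb{E}[|h_t|^2] \le 1$ from standard DBGD, and as $\Rank(\bA_t) \ll d$ the variance reduction factor is exactly $\Rank(\bA_t)/d$, matching the intuition promised in Section~\ref{sec:alg}.
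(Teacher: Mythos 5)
Your proposal is correct and follows essentially the same route as the paper's proof: bound $Z_t^2\le 1$, apply the trace trick with $\bbE[u_tu_t^\top]=\frac{1}{d}I$, and use idempotence of the orthogonal projector so that $\Tr(\bA_t)=\Rank(\bA_t)$. The only cosmetic difference is that you collapse $\bA_t^\top\bA_t$ to $\bA_t$ before taking the trace, whereas the paper carries $\bA_t\bA_t^\top$ through and simplifies at the end.
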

\begin{proof}
\begin{align*}
\bbE[|h_t|^2] &= \bbE_{u_t}\left[|-Z_t(w_t + \delta u_t)\bA_t u_t|^2\right]\\
&\leq \bbE_{u_t}\left[|\bA_t u_t|^2\right]\\
&=\bbE_{u_t}\left[(\bA_t u_t)^\top (\bA_t u_t) \right]\\
& = \Tr\left(\bbE_{u_t}\left[\bA_t u_t u_t^\top \bA_t^\top \right]\right) \text{ //apply the trace trick}\\
& = \Tr\left(\bA_t\bbE_{u_t}\left[ u_t u_t^\top  \right]\bA_t^\top\right)\\
&= \Tr\left(\bA_t \frac{1}{d}I \bA_t^\top\right)\\
&= \frac{1}{d}\Tr\left(\bA_t \bA_t^\top\right)\\
&= \frac{1}{d}\Tr\left(\bA_t\right) \text{ //a projection matrix is idempotent} \\
&= \frac{\Rank(\bA_t)}{d} 
\end{align*}
where $\Tr(\cdot)$ denotes the matrix trace operation. The sixth equality holds because $u_t$ is uniformly sampled from a unit sphere, and its covariance matrix $\bbE_{u_t}\left[ u_t u_t^\top  \right]$ is $\frac{1}{d}I$.  Since $\bA_t$ is an orthogonal projection matrix, the eighth equality holds for $\bA_t \bA_t^\top = \bA_t $.  
\end{proof}
\begin{remark}
The variance of gradient update in DBGD \cite{yue2009interactively} is bounded by $\bbE_{u_t}\left[|-Z_t(w_t + \delta u_t) u_t|^2\right]\leq 1$.
\end{remark}
Comparing the variance of gradient update in DBGD-DSP with DBGD, our method reduces the variance from $1$ to $\frac{\Rank(\bA_t)}{d}$. Since the dimension of projection matrix $\bA_t$ is $d$-by-$d$, we have $\Rank(\bA_t)\leq d$, which guarantees the reduction of variance in DBGD-DSP comparing to that in DBGD. The rank of $\bA_t$ is also bounded by the number of \emph{examined} documents $m_t$, since document space $S_t$ is constructed by these $m_t$ examined documents. In practice, users would only examine a handful of documents \cite{joachims2017accurately,craswell2008experimental}, while the ranking feature dimension is expected to be much larger. We argue that $m_t \ll d$, such that our document space projection achieves considerable variance reduction. 

The significance of this variance reduction can be intuitively understood from Figure \ref{Fig:procedure}: though different traces of model update would eventually lead to the same converged model, if one has a sufficiently large amount of interactions with users, the one with lower variance would always require less observations. A faster converging algorithm leads to user satisfaction earlier. Next, we verify this benefit by proving the reduction of regret introduced by the reduced variance in gradient estimation.  
\begin{theorem}\label{theorem:regret} 

By setting
\begin{equation*}
    m = \max_t m_t, \delta= \frac{\sqrt{2Rm}}{\sqrt{13L}T^{1/4}}, \alpha=\frac{Rm}{\sqrt{T}\delta},
\end{equation*}
the expected regret of DBGD-DSP as defined in Eq \eqref{eq:regret} is upper bounded by, 
\begin{equation}
    \bbE[Reg] \leq 2\lambda_T T^{3/4}\sqrt{26RmL},
\end{equation}
where
\begin{equation*}
\lambda_T = \frac{L_\sigma\sqrt{13L}T^{1/4}}{L_\sigma\sqrt{13L}T^{1/4} - L_vL_2\sqrt{2Rm}}
\end{equation*}
\end{theorem}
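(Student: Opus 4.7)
The plan is to adapt the DBGD regret analysis of Yue and Joachims (2009) by plugging in our improved variance bound from Lemma \ref{lemma:variance}. The overall strategy is to treat DBGD-DSP as stochastic projected gradient descent on the smoothed surrogate $\hat f_t(w)=\bbE_{u\in\bbB}[f_t(w+\delta u)]$, apply a standard online gradient descent template to $\hat f_t$, and then translate the surrogate regret back to the true loss $f_t$ with smoothing and exploration correction terms.

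First, I would decompose each per-round term $f_t(w^*,w_t)+f_t(w^*,w'_t)$ in Eq.~\eqref{eq:regret} into three pieces: (i) the surrogate regret $\hat f_t(w^*,w_t)-\hat f_t(w_t,w_t)$, (ii) a smoothing error $|f_t-\hat f_t|$ evaluated at both $w^*$ and $w_t$, and (iii) an exploration cost $f_t(w^*,w'_t)-f_t(w^*,w_t)$ incurred by proposing the perturbed ranker. Piece (ii) is bounded at order $\delta^2$ using the second-order Lipschitz constant $L_2$ of $\sigma$ combined with the $L_v$-Lipschitz utility, and piece (iii) is bounded at order $L_\sigma L_v\delta$. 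Summed across $t$, they contribute additive terms of order $T\delta^2$ and $T\delta$, respectively.

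Second, I would handle the surrogate term using the standard OGD inequality. Theorem \ref{theorem:unbias} gives $\bbE[h_t]=\frac{\delta}{d}\nabla \hat f_t(w_t)$, and Lemma \ref{lemma:variance} gives $\bbE[|h_t|^2]\le \Rank(\bA_t)/d\le m/d$. The telescoping argument $|w_{t+1}-w^*|^2=|w_t-w^*|^2-2\alpha\langle h_t,w_t-w^*\rangle+\alpha^2|h_t|^2$, taken in expectation and summed, yields a surrogate regret bound of the form $\frac{R^2 d}{2\alpha\delta}+\frac{\alpha T m}{2\delta}$ after accounting for the $\delta/d$ prefactor in the expected gradient, with $R$ an upper bound on $|w^*-w_1|$. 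Here the factor $m/d$ (instead of the $1$ used in the original DBGD analysis) is precisely where our variance reduction enters the regret.

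Third, I would combine all contributions into a single expression of the form $C_0\tfrac{R^2}{\alpha}+C_1\tfrac{\alpha m T}{\delta^2}+C_2 T\delta^2+C_3 T\delta$ and minimize over $(\delta,\alpha)$. Substituting the prescribed $\delta=\sqrt{2Rm}/(\sqrt{13L}\,T^{1/4})$ and $\alpha=Rm/(\sqrt{T}\delta)$ balances the dominant $\alpha$-and-$\delta$ terms at order $T^{3/4}\sqrt{RmL}$. The factor $\lambda_T$ then emerges from isolating the smoothing error: after bounding $|f_t-\hat f_t|\le L_v L_2\delta^2/2$ and moving this self-referential term to the left-hand side of a regret inequality, the residual multiplier is exactly $\lambda_T=\frac{L_\sigma\sqrt{13L}\,T^{1/4}}{L_\sigma\sqrt{13L}\,T^{1/4}-L_v L_2\sqrt{2Rm}}$, which is finite precisely when $T$ is large enough to make the denominator positive.

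The main obstacle I expect is the careful bookkeeping: propagating the tighter $m/d$ variance through every line of the OGD telescoping while simultaneously tracking the $L_v$, $L_\sigma$, and $L_2$ constants through the smoothing approximation so that the numeric constants $\sqrt{13L}$, $\sqrt{2Rm}$, and $\sqrt{26RmL}$ appear exactly as stated rather than only up to $O(\cdot)$ factors. A secondary subtlety is choosing $\delta$ and $\alpha$ so that the denominator of $\lambda_T$ stays positive and that $w_t$ remains in a region where the strong concavity of $v_t$ and Lipschitz properties of $\sigma$ can be applied uniformly; this is what ultimately pins down the specific form of $\delta$ in the theorem.
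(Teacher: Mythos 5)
Your plan follows the same route as the paper's own argument: the paper's ``proof'' of Theorem~\ref{theorem:regret} is a one-sentence deferral to Theorem~2 of \cite{yue2009interactively}, stating that the only change is to substitute the variance bound of Lemma~\ref{lemma:variance} for the original bound of $1$. Your three-step outline --- the surrogate/smoothing/exploration decomposition, the OGD telescoping with the new second moment, the balancing of $\delta$ and $\alpha$, and the extraction of $\lambda_T$ from the self-referential smoothing term --- is exactly the skeleton of that adaptation, spelled out in more detail than the paper itself provides.

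There is, however, one step that does not go through as written. You take $\bbE[h_t]=\frac{\delta}{d}\nabla\hat f_t(w_t)$ from Theorem~\ref{theorem:unbias} and $\bbE[|h_t|^2]\le \Rank(\bA_t)/d\le m/d$ from Lemma~\ref{lemma:variance}. In the standard OGD template the effective gradient estimator is then $\frac{d}{\delta}h_t$, whose second moment is bounded by $\frac{d^2}{\delta^2}\cdot\frac{m}{d}=\frac{dm}{\delta^2}$, not $\frac{m}{\delta^2}$. Carrying this honestly through the balancing step yields a regret of order $T^{3/4}\sqrt{R\sqrt{dm}\,L}$, i.e.\ a $(dm)^{1/4}$ dependence --- an improvement over DBGD's $\sqrt{d}$ whenever $m<d$, but not the $\sqrt{m}$ claimed. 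The symptom is visible in your own sketch: the ambient dimension $d$ appears in your step-two surrogate bound $\frac{R^2 d}{2\alpha\delta}+\frac{\alpha T m}{2\delta}$ but silently vanishes from the step-three expression $C_0\frac{R^2}{\alpha}+C_1\frac{\alpha m T}{\delta^2}$ (which is also inconsistent with step two in the power of $\delta$). To land exactly on the stated bound, both the unbiasedness rescaling and the variance must live in the $m$-dimensional document space --- i.e.\ the Stokes'-theorem step should be applied over $S_t$ rather than $\bbR^d$, replacing $\frac{\delta}{d}$ by $\frac{\delta}{\Rank(\bA_t)}$. Neither your sketch nor the paper makes this explicit, so you should either supply that argument or settle for the weaker $(dm)^{1/4}$ bound.
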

The proof is obtained by extending Theorem 2 in \cite{yue2009interactively}. 
We omit the details due to space limit, and emphasize that the key difference is introduced by replacing variance of gradient estimation from $\bbE_{u_t}\left[|-Z_t(w_t + \delta u_t) u_t|^2\right]$ to $\bbE_{u_t}\left[|-Z_t(w_t + \delta u_t) \bA_t u_t|^2\right]$. Since the variance of gradient estimation is reduced from 1 to $\frac{\Rank(\bA_t)}{d}$, the regret of DBGD can be reduced from $O(\sqrt{d}T^{3/4})$ to $O(\sqrt{m}T^{3/4})$, where $m$ is the maximum number of documents included in a document space under a single query. Again, as the number of included ranking features is oftentimes much larger than the number of documents a user would examine under a single query, the reduction of regret is considerable. Moreover, as the reduction of variance from our project-based method is independent from the way about how the proposal directions are generated, our method can be generally applied to most existing DBGD-type OL2R algorithms to improve their learning convergence.

\subsection{Practical Treatments of Document Space Projection} \label{sec:practice}
Now we discuss several practical treatments of our proposed Document Space Projection method, including the construction of document space and orthogonal projection matrix. 

In our theoretical analysis, we have assumed the knowledge of users' examined documents and corresponding projection matrix. However, in practice, a user's result examination is unobserved. A rich body of research has been developed to perform statistical inference of it, collectively known as click modeling \cite{chapelle2009dynamic,craswell2008experimental}. Any of these existing click models can be plugged into our solution framework, i.e., line 13 of Algorithm \ref{alg}. In this work, we simply follow \cite{joachims2017accurately} to infer user examination by the last clicked position: given the click position list $C_t$, we use the last clicked position $c_{l,t}$ to approximate the last examined position $M_t$ by setting $M_t = c_{l,t}+k$, where $k$ is a hyper-parameter. Based on sequential examination hypothesis of click modeling, every document before the last clicked position is examined, and we use $k$ to approximate the number of positions following the last clicked position that were still examined. We leave more comprehensive study of click modeling in our solution as future work. 

The above treatment provides a reasonable inference of examined documents. However, it requires a careful choice of $k$ for each query (preferably). If $k$ is set too large, the variance of gradient estimate will increase (as proved in Lemma \ref{lemma:variance}). If $k$ is too small, the document space may not include all examined documents, and it is at risk of introducing bias in gradient projection. To avoid bias in constructing the document space, we also consider adding historically examined documents to the current query's document space. Specifically, we add $r$ recently examined documents to the current document space $S_t$ to compensate the potentially overlooked examined documents in the current query.

In line 14 of Algorithm \ref{alg}, we solve the orthogonal projection matrix $\bA_t$ of document space $S_t$. $\bA_t$ could be computed by several methods. Denote $D_t$ as a $d$-by-$m_t$ matrix where each column is the feature vector for an examined document. One can use QR decomposition or Singular Value Decomposition (SVD) to solve for its orthonormal basis $V_t$, and the projection matrix can then be constructed by $\bA_t = V_t V_t^T$. In our experiments, we chose SVD for constructing the basis of document space, because of its widely available and efficient large-scale implementations. But the choice for the construction of this project matrix does not affect the convergence nor unbiasedness of our proposed solution. 

\section{Experiments}\label{sec:exp}

To demonstrate our proposed Document Space Projection method's empirical efficacy, we compare the performance of several best-performing DBGD-type OL2R algorithms on five public learning to rank datasets, with and without our document space projection method applied. 

\subsection{Experiment Setup}
\label{exp:setup}
\noindent\textbf{$\bullet$ Datasets.}
We tested our algorithms and the baselines on five benchmark datasets: including MQ2007, MQ2008, NP2003 \cite{liu2007letor}, MSLR-WEB10K \cite{qin2013introducing}, and the Yahoo! Learning to Rank Challenge dataset \cite{chapelle2011yahoo}. In each of the five datasets, each query-document pair is encoded as a vector of ranking features. These features include PageRank, TF.IDF, Okapi-BM25, URL length, language model score, and many more varied by dataset.

The MQ2007 and MQ2008 datasets are collected from the 2007 and 2008 Million Query track at TREC \cite{voorhees2005trec}. MQ2007 contains about 1700 queries, and MQ2008 contains about 800 queries, which represent a mix of informational and navigational search intents. They both have 46-dimensional feature vectors to represent query-document pairs, and the document relevance are labeled in three grades: 0 (not relevant), 1 (relevant), and 2 (most relevant).

The NP2003 dataset also comes from the TREC Web track, consisting of queries crawled from the .gov domain. It is comprised of about 150 navigational-focused queries, with over 1000 document relevance assessments per query. It uses 64 ranking features, and the document relevance labels are binary (0 and 1 only). 

The MSLR-WEB10K dataset was released by Microsoft in 2010, and consists of 10,000 queries with relevance assessments coming from a labeling set from the Microsoft Bing search engine. It has 136 ranking features, and the relevance judgments range from 0 (not relevant) to 4 (most relevant).

The Yahoo! Learning to Rank Challenge dataset was also released in 2010, as an effort on  part of Yahoo! to promote the dataset as well as research into better learning to rank algorithms. The dataset contains about 36,000 queries, 883,000 assessed documents, and 700 ranking features. Again, the relevance judgments range from 0 (not relevant) to 4 (most relevant)
 
This diversity in the structure of the datasets that we chose to test on helps us to evaluate our algorithms more holistically. While small, the MQ2007 and MQ2008 sets have been around for a long time and have a good mix of query types. NP2003 gives us insight into how the algorithms perform on navigational search intents specifically, which are markedly different in nature from informational search intents. MSLR-WEB10K and the Yahoo! dataset are large-scale datasets used by actual commercial search engines, which give us a better understanding of how the algorithms perform in practice. Since each dataset was split into training, testing, and validation subsets, we used the training sets for online experiments to measure cumulative performance, and used the testing sets for evaluating offline performance.

\noindent\textbf{$\bullet$ Simulated User Interactions.}
Based on an online learning to rank framework proposed in \cite{Oosterhuis_2018}, we use the standard setup to simulate user interactions. Within this framework, we used the Cascade Click Model to simulate user click behavior. This model assumes that a user interacts with a set of search results by linearly scanning the list from top and making a decision for each document as to whether or not to click. In the model, the probability of a click for a given document is conditioned on the relevance label of that document, as a user is expected to be more likely to click on relevant documents. After evaluating each document, the user must decide whether or not to continue perusing the list. This decision's probability distribution is again conditioned on the relevance of the examined document, as a user is more likely to stop looking through the results if he/she has already satisfied their information need. These aforementioned probabilities can be altered to simulate different types of users and interactions.

\begin{figure*}[ht]
\centering
\setlength\tabcolsep{4pt}
\vspace{-1mm}
\begin{tabular}{ccc}
\hspace*{0cm}
\includegraphics[width=5.7cm]{./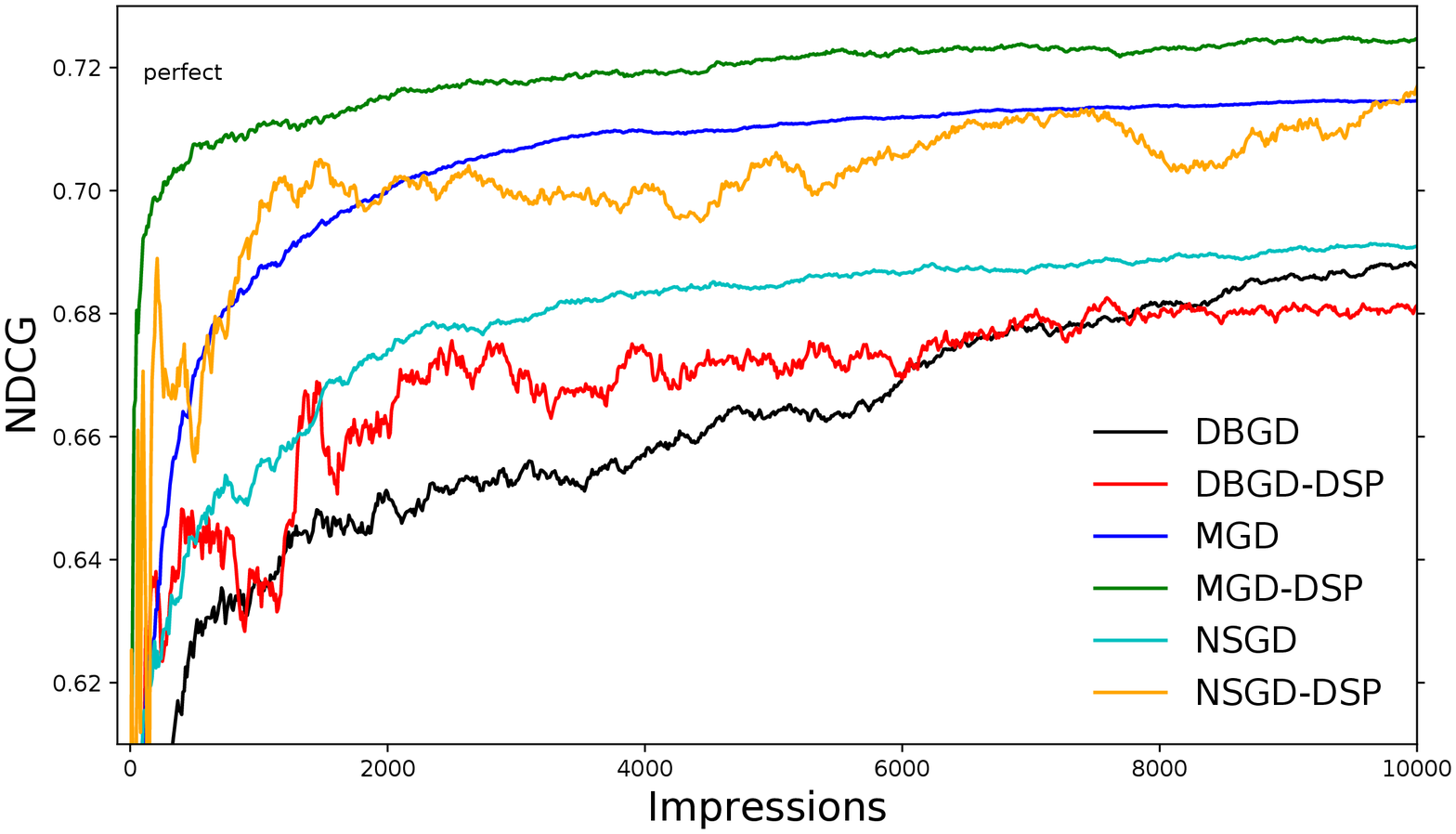} & 
\includegraphics[width=5.7cm]{./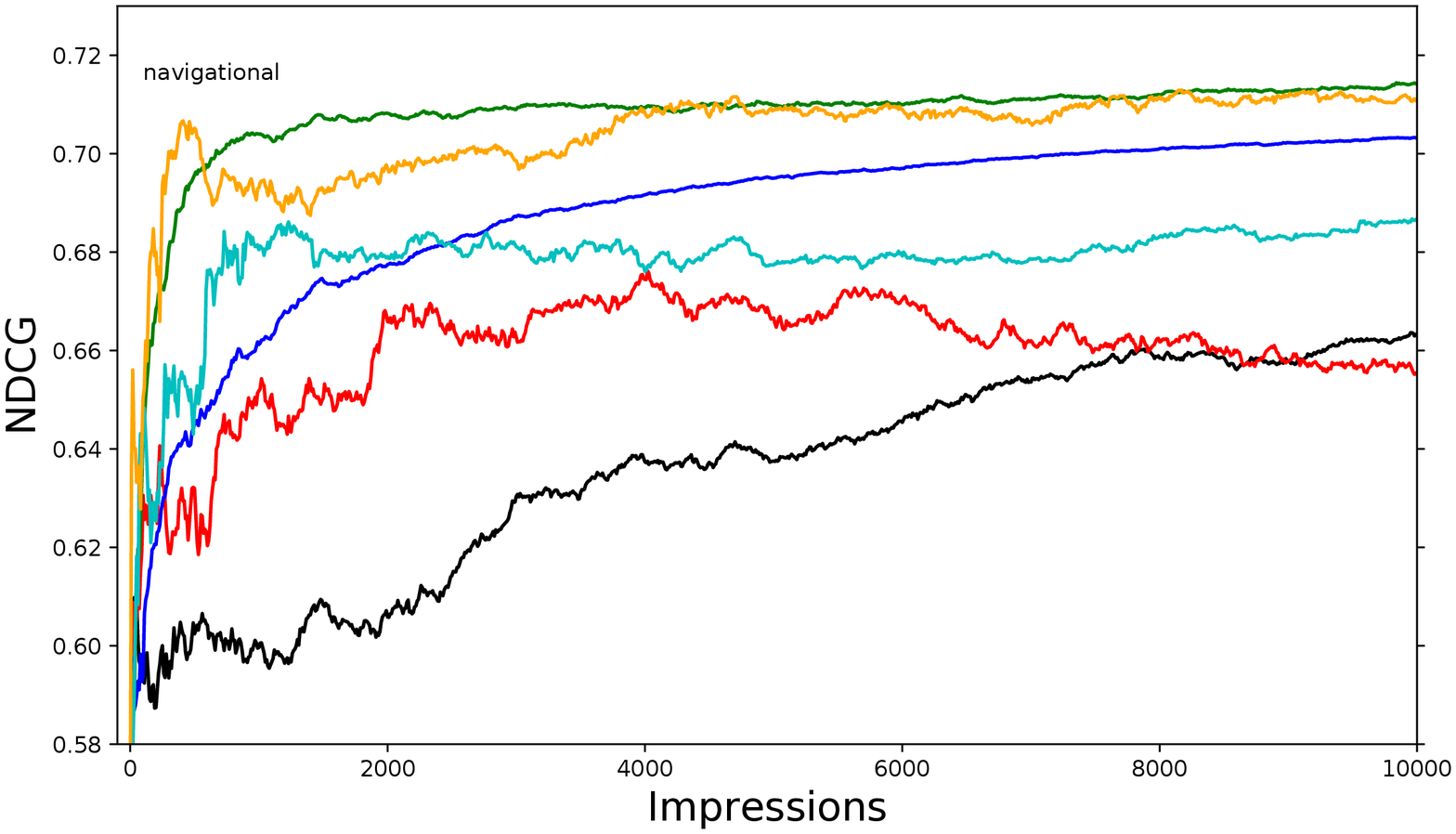} & 
\includegraphics[width=5.7cm]{./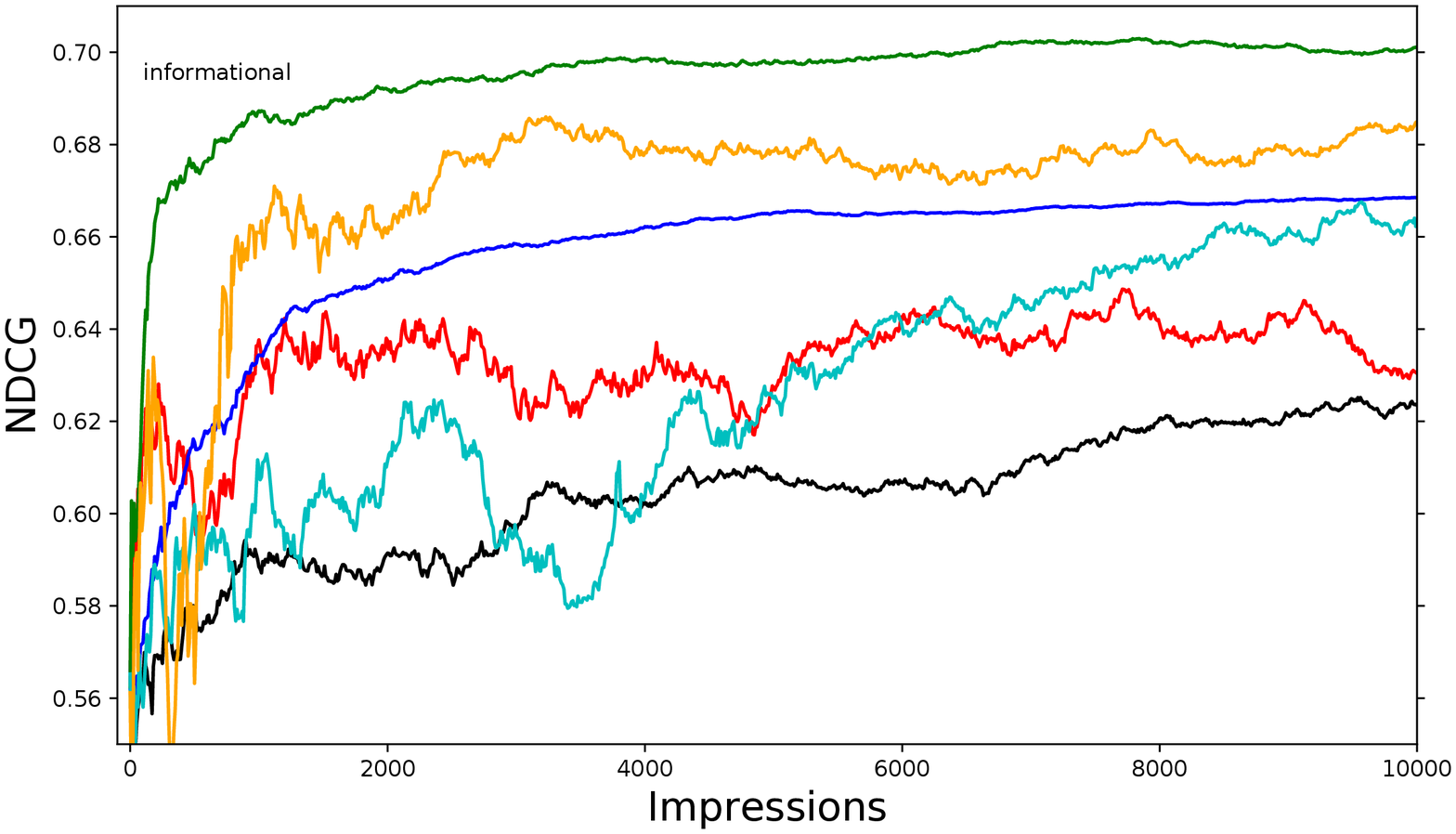} \\
(a) Perfect & (b) Navigational & (c) Informational \\
\end{tabular}
\vspace{-2mm}
\caption{Offline NDCG@10 on Yahoo! dataset.} \label{Fig:offline}
\vspace{-2mm}
\end{figure*}

As illustrated in Table \ref{tab:clickmodel}, we use three different click model probability configurations to represent three different types of users. First, we have the \textit{perfect} user, who clicks on all relevant documents and does not stop browsing until they have visited all of the documents. This type of users contribute the least noise, as they make no mistakes and the feedback is entirely accurate. Second, we have the \textit{navigational} user, who is very likely to click on the first highly relevant document that he/she sees and stops there. Third, we have the \textit{informational} user, who, in his/her search for information, sometimes clicks on irrelevant documents, and as such contributes a significant amount of noise in click feedback. 

\begin{table}[t]
    \centering
    \caption{Configurations of simulation click models.}\label{tab:clickmodel}
    \vspace{-1mm}
    \begin{tabular}{ccccccccccc}
    \hline
         & \multicolumn{5}{c}{Click Probability} & \multicolumn{5}{c}{Stop Probability} \\
        R & 0 & 1 & 2 &3&4 & 0 & 1 & 2 & 3 & 4\\
         \hline
        Per & 0.0 & 0.2 & 0.4 & 0.8 & 1.0 & 0.0 & 0.0 & 0.0 & 0.0 & 0.0\\
        Nav & 0.05 & 0.3 & 0.5 & 0.7 & 0.95 & 0.2 & 0.3 & 0.5 & 0.7 & 0.9\\
        Inf & 0.4 & 0.6 & 0.7 & 0.8 & 0.9 & 0.1 & 0.2 & 0.3 & 0.4 & 0.5\\
        \hline
    \end{tabular}
    \vspace{-3mm}
\end{table}

\noindent\textbf{$\bullet$ Evaluation Metrics.}
As set forth in \cite{schuth2013lerot}, cumulative (online) Normalized Discounted Cumulative Gain (NDCG) and offline NDCG are commonly used metrics for evaluating OL2R algorithms. Cumulative NDCG is calculated by summing NDCG scores from successive iterations with a discount factor $\gamma$ set to 0.995. We assess our model's estimation convergence via cosine similarity between the current weight vector and a reference weight vector (considered to be the optimal vector) as estimated by an offline learning-to-rank algorithm trained with the complete true relevance judgment labels. Due to its superior empirical performance, we used LambdaRank \cite{burges2010ranknet} with no hidden layer in our experiments to estimate this reference weight vector. In each experiment, the number of iterations $T$ was set to 10,000, and the current query $X_t$ was randomly sampled from the dataset in each iteration. We execute all the experiments 15 times with different random seeds, and report and compare the average performance in all experiments.  

\noindent\textbf{$\bullet$ Evaluation Questions.}
To better understand the advantages of our proposed algorithms, we aim to answer the following evaluation questions through the course of our experiments.
\begin{itemize}
\item[Q1:] Can our proposed Document Space Projection method consistently improve the performance of best-performing DBGD-type OL2R algorithms?
\item[Q2:] Do gradients rectified by our document space projection explore the gradient space more efficiently?
\item[Q3:] How do different hyper-parameter settings alter the performance of our document space projection?
\end{itemize}

\noindent\textbf{$\bullet$ Baseline Algorithms.}
We choose the following three best-performing DBGD-type OL2R algorithms as our baselines for comparison:
\begin{itemize}
    \item[-] \textbf{DBGD}~\cite{yue2009interactively}: A single direction uniformly sampled from the whole parameter space is explored. 
    \item[-] \textbf{MGD}~ \cite{schuth2016multileave}: Multiple directions are explored in one iteration to reduce the gradient estimation variance. Multileaving is used to compare multiple rankers. The model updates towards the mean of all rankers that beat the current model.
    \item[-] \textbf{NSGD}\cite{wang2018efficient}: Multiple directions are sampled from the null space of previously poorly performing gradients. Ties are broken by evaluating the tied candidate rankers on a recent set of difficult queries.
\end{itemize}
We apply our proposed Document Space Projection to the baseline algorithms, and compare them with DBGD-DSP, MGD-DSP and NSGD-DSP, respectively.

\begin{table*}[ht] 
    \centering
    \caption{Online NDCG@10, standard deviation and relative improvement of document space projection of each algorithm after 10,000 queries. } 
    \label{Tab:online}
\begin{tabular*}{\textwidth}{@{\extracolsep{\fill} } l  l l l l l  l}
\toprule
 { \small Click Model} & { \small Algorithm}  & { \small \textbf{MQ2007}}  & { \small \textbf{MQ2008}}  & { \small \textbf{MSLR-WEB10K}}  & { \small \textbf{NP2003}}  & { \small \textbf{Yahoo}} \\
\midrule 
\multirow{6}{*}{Perfect} &
DBGD& 679.3 {\tiny (21.6)} & 847.1 {\tiny (38.4)} & 532.2 {\tiny (15.3)} & 1130.2 {\tiny (43.3)} & 1165.5 {\tiny (22.6)} \\
&DBGD-DSP & 689.1 {\tiny (19.5)}(+1.44\%) & 858.0 {\tiny (39.2)}(+1.29\%) & 553.6 {\tiny (13.1)}(+4.02\%) & 1198.8 {\tiny (40.0)} (+6.07\%)& 1198.8 {\tiny (33.5)}(+2.86\%) \\
\cline{2-7}
&MGD& 689.1 {\tiny (14.6)} & 859.4 {\tiny (38.1)} & 558.3 {\tiny (7.0)} & 1192.9 {\tiny (44.6)} & 1201.9 {\tiny (16.3)} \\
&MGD-DSP & \bf 757.3 {\tiny (16.2)}(+9.90\%) & \bf 919.5 {\tiny (42.2)}(+6.99\%) & 626.4 {\tiny (9.6)}(+12.20\%) & 1335.3 {\tiny (39.1)}(+11.94\%) & \bf 1309.4 {\tiny (10.6)} (+8.94\%)\\
\cline{2-7}
&NSGD & 684.4 {\tiny (20.5)} & 867.5 {\tiny (40.3)} & 589.5 {\tiny (14.2)} & 1274.9 {\tiny (47.4)} & 1162.3 {\tiny (12.9)} \\
&NSGD-DSP & 732.5 {\tiny (20.0)}(+7.03\%) & 904.3 {\tiny (38.0)}(+4.24\%) & \bf 635.6 {\tiny (12.8)}(+7.82\%) & \bf 1368.5 {\tiny (41.1)}(+7.34\%) & 1270.1 {\tiny (2.5)}(+9.27\%) \\
\midrule
\multirow{6}{*}{Navigational} &DBGD& 646.1 {\tiny (23.4)} & 817.9 {\tiny (45.5)} & 517.5 {\tiny (20.9)} & 1062.3 {\tiny (55.4)} & 1133.3 {\tiny (40.8)} \\
&DBGD-DSP & 664.9 {\tiny (26.9)}(+2.91\%) & 830.3 {\tiny (44.1)}(+1.52\%) & 543.1 {\tiny (14.8)}(+4.95\%) & 1140.1 {\tiny (52.5)}(+7.32\%) & 1199.4 {\tiny (34.6)}(+5.83\%)\\
\cline{2-7}
&MGD& 632.7 {\tiny (15.5)} & 827.5 {\tiny (35.5)} & 538.2 {\tiny (7.2)} & 1115.4 {\tiny (44.6)} & 1171.3 {\tiny (20.4)} \\
&MGD-DSP & \bf 694.5 {\tiny (15.7)}(+9.77\%) & \bf 882.3 {\tiny (40.0)}(+6.62\%) & 586.9 {\tiny (9.5)}(+9.05\%) & \bf 1300.9 {\tiny (39.6)}(+16.63\%) & \bf 1290.2 {\tiny (15.3)} (+10.15\%)\\
\cline{2-7}
&NSGD & 660.1 {\tiny (24.5)} & 849.1 {\tiny (36.6)} & 562.1 {\tiny (18.8)} & 1211.1 {\tiny (66.5)} & 1186.2 {\tiny (16.8)} \\
&NSGD-DSP & 724.6 {\tiny (24.5)}(+9.77\%) & 895.8 {\tiny (34.2)}(+5.50\%) & \bf 608.3 {\tiny (12.1)} (+8.22\%)& 1296.2 {\tiny (24.3)} (+7.03\%) & 1283.4 {\tiny (7.2)}(+8.19\%) \\
\midrule
\multirow{6}{*}{Informational} &
DBGD& 583.4 {\tiny (46.0)} & 763.9 {\tiny (55.1)} & 472.4 {\tiny (34.6)} & 849.8 {\tiny (144.5)} & 1107.3 {\tiny (46.6)} \\
&DBGD-DSP & 620.1 {\tiny (40.8)}(+6.29\%) & 782.4 {\tiny (51.8)} (+2.42\%)& 522.1 {\tiny (18.6)} (+10.52\%)& 992.5 {\tiny (81.1)}(+16.79\%) & 1158.5 {\tiny (22.0)}(+4.62\%) \\
\cline{2-7}
&MGD& 621.2 {\tiny (18.2)} & 817.5 {\tiny (45.3)} & 538.3 {\tiny (10.8)} & 1107.9 {\tiny (46.2)} & 1146.6 {\tiny (37.5)} \\
&MGD-DSP & \bf 671.4 {\tiny (18.9)}(+8.08\%) & \bf 865.9 {\tiny (37.7)}(+5.92\%) & 580.5 {\tiny (10.4)}(+7.84\%) & \bf 1274.5 {\tiny (42.9)}(+15.04\%) & \bf 1268.1 {\tiny (16.4)}(+10.60\%) \\
\cline{2-7}
&NSGD & 629.7 {\tiny (25.3)} & 814.9 {\tiny (37.1)} & 532.9 {\tiny (15.2)} & 1123.5 {\tiny (59.8)} & 1110.5 {\tiny (10.9)} \\
&NSGD-DSP & 703.6 {\tiny (29.2)}(+11.74\%) & 871.3 {\tiny (48.3)}(+6.92\%) & \bf 597.9 {\tiny (14.1)}(+12.20\%) & 1222.8 {\tiny (43.8)}(+9.03\%) & 1204.7 {\tiny (9.6)}(+8.48\%) \\
\bottomrule
\end{tabular*}
\end{table*}

\begin{table*}[ht] 
    \centering
    \caption{Offline NDCG@10, standard deviation and relative improvement of document space projection of each algorithm after 10,000 queries. }
    \label{Tab:offline}
\begin{tabular*}{\textwidth}{@{\extracolsep{\fill} } l  l l l l l  l}
\toprule
 { \small Click Model} & { \small Algorithm}  & { \small \textbf{MQ2007}}  & { \small \textbf{MQ2008}}  & { \small \textbf{MSLR-WEB10K}}  & { \small \textbf{NP2003}}  & { \small \textbf{Yahoo}} \\
\midrule 
\multirow{6}{*}{Perfect} & DBGD& 0.484 {\tiny (0.023)} & 0.683 {\tiny (0.023)} & 0.331 {\tiny (0.009)} & 0.737 {\tiny (0.056)} & 0.688 {\tiny (0.011)} \\
&DBGD-DSP & 0.480 {\tiny (0.020)} (-0.83\%)  & 0.685 {\tiny (0.024)} (+0.29\%)  & 0.333 {\tiny (0.011)} (+0.6\%)  & 0.738 {\tiny (0.059)} (+0.14\%)  & 0.681 {\tiny (0.013)} (-1.02\%)  \\
\cline{2-7}
& MGD & 0.495 {\tiny (0.022)} & 0.691 {\tiny (0.020)} & 0.334 {\tiny (0.003)} & 0.746 {\tiny (0.048)} & 0.715 {\tiny (0.002)} \\
& MGD-DSP & \bf 0.501 {\tiny (0.021)}(+1.21\%)  & \bf 0.695 {\tiny (0.022)}(+0.58\%)  & \bf 0.409 {\tiny (0.006)}(+22.46\%)  & 0.748 {\tiny (0.055)}(+0.27\%)  & \bf 0.725 {\tiny (0.003)}(+1.40\%) \\
\cline{2-7}
& NSGD & 0.488 {\tiny (0.019)} & 0.689 {\tiny (0.024)} & 0.397 {\tiny (0.012)}  & 0.743 {\tiny (0.050)} & 0.691 {\tiny (0.005)} \\
& NSGD-DSP & 0.491 {\tiny (0.022)}(+0.61\%)  & 0.691 {\tiny (0.025)}(+0.29\%)  & \bf 0.398 {\tiny (0.008)} (+0.25\%) & \bf 0.750 {\tiny (0.042)} (+0.94\%) & 0.717 {\tiny (0.004)}(+3.76\%) \\
\midrule
\multirow{6}{*}{Navigational}& DBGD& 0.463 {\tiny (0.028)} & 0.667 {\tiny (0.021)} & 0.320 {\tiny (0.012)} & 0.728 {\tiny (0.054)} & 0.663 {\tiny (0.020)} \\
&DBGD-DSP & 0.465 {\tiny (0.024)}(+0.43\%)  & 0.668 {\tiny (0.023)}(+0.15\%)  & 0.327 {\tiny (0.011)}(+2.19\%)  & 0.734 {\tiny (0.052)}(+0.82\%)  & 0.656 {\tiny (0.013)}(-1.06\%)  \\
\cline{2-7}
&MGD& 0.426 {\tiny (0.019)} & 0.664 {\tiny (0.016)} & 0.321 {\tiny (0.003)} & 0.740 {\tiny (0.048)} & 0.703 {\tiny (0.010)} \\

&MGD-DSP & 0.467 {\tiny (0.021)}(+9.62\%)  & \bf 0.684 {\tiny (0.017)}(+3.01\%)  & 0.331 {\tiny (0.005)}(+3.12\%)  & 0.744 {\tiny (0.053)}(+0.54\%)  & \bf 0.714 {\tiny (0.006)}(+1.56\%)  \\
\cline{2-7}
&NSGD & 0.473 {\tiny (0.022)} & 0.676 {\tiny (0.024)} & \bf 0.389 {\tiny (0.013)} & 0.732 {\tiny (0.053)} & 0.686 {\tiny (0.008)} \\
&NSGD-DSP & \bf 0.478 {\tiny (0.020)}(+1.06\%)  & \bf 0.683 {\tiny (0.026)}(+1.04\%)  & 0.376 {\tiny (0.014)}(-3.34\%) & \bf 0.788 {\tiny (0.006)}(+7.65\%)  & \bf 0.711 {\tiny (0.001)}(+3.64\%) \\
\midrule
\multirow{6}{*}{Informational} & DBGD& 0.410 {\tiny (0.034)} & 0.641 {\tiny (0.031)} & 0.294 {\tiny (0.022)} & 0.699 {\tiny (0.063)} & 0.623 {\tiny (0.037)} \\
&DBGD-DSP & 0.427 {\tiny (0.027)}(+4.15\%)  & 0.632 {\tiny (0.031)}(-1.4\%)  & 0.309 {\tiny (0.011)}(+32.65\%)  & 0.692 {\tiny (0.062)}(-1.00\%)  & 0.63 {\tiny (0.030)}(1.12\%)  \\
\cline{2-7}
&MGD& 0.406 {\tiny (0.020)} & 0.651 {\tiny (0.020)} & 0.317 {\tiny (0.003)} & 0.726 {\tiny (0.050)} & 0.668 {\tiny (0.044)} \\
&MGD-DSP & 0.444 {\tiny (0.025)}(+0.44\%)  & 0.669 {\tiny (0.018)}(+0.67\%) & 0.325 {\tiny (0.004)}(+0.33\%) & 0.738 {\tiny (0.054)}(+0.74\%) & \bf 0.701 {\tiny (0.005)}(+4.94\%)  \\
\cline{2-7}
&NSGD & \bf 0.469 {\tiny (0.018)} & \bf 0.674 {\tiny (0.023)} & \bf 0.360 {\tiny (0.013)} & 0.733 {\tiny (0.056)} & 0.663 {\tiny (0.015)} \\

&NSGD-DSP & 0.466 {\tiny (0.019)}(-0.64\%)  & 0.668 {\tiny (0.026)}(-0.89\%)  & 0.340 {\tiny (0.018)}(-5.56\%) & \bf 0.789 {\tiny (0.013)}(+7.64\%)  & 0.685 {\tiny (0.004)}(+3.32\%) \\
\bottomrule
\end{tabular*}
\vspace{-2mm}
\end{table*}

\begin{figure*}[ht]
\centering
\vspace{-5mm}
\setlength\tabcolsep{3pt}
\begin{tabular}{ >{\centering\arraybackslash}m{4.3cm} >{\centering\arraybackslash}m{4.3cm} >{\centering\arraybackslash}m{4.3cm}
>{\centering\arraybackslash}m{4.3cm}}

\hspace*{-3mm}
\includegraphics[width=4.5cm]{./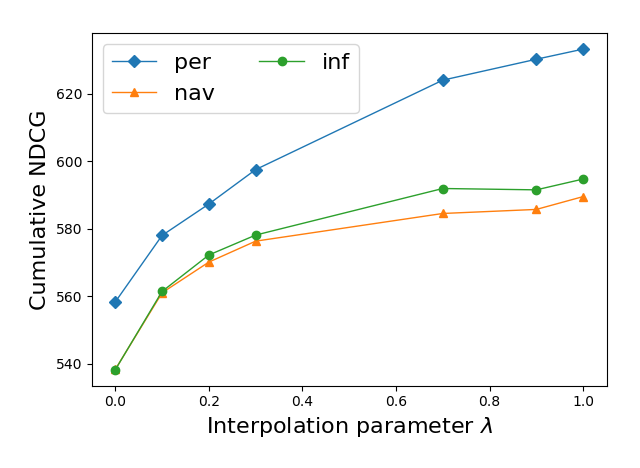} &
\includegraphics[width=4.5cm]{./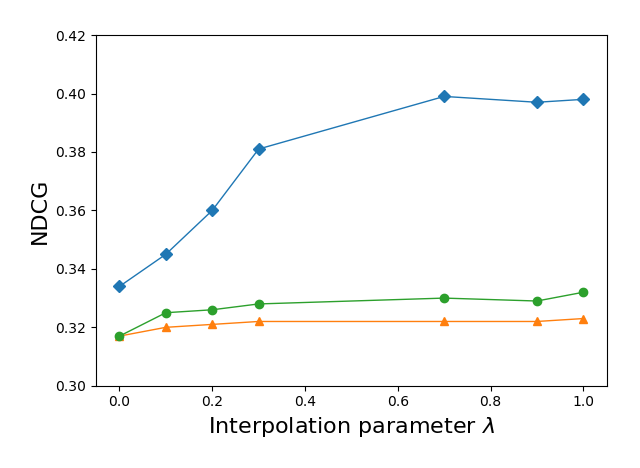}&
\includegraphics[width=4.7cm]{./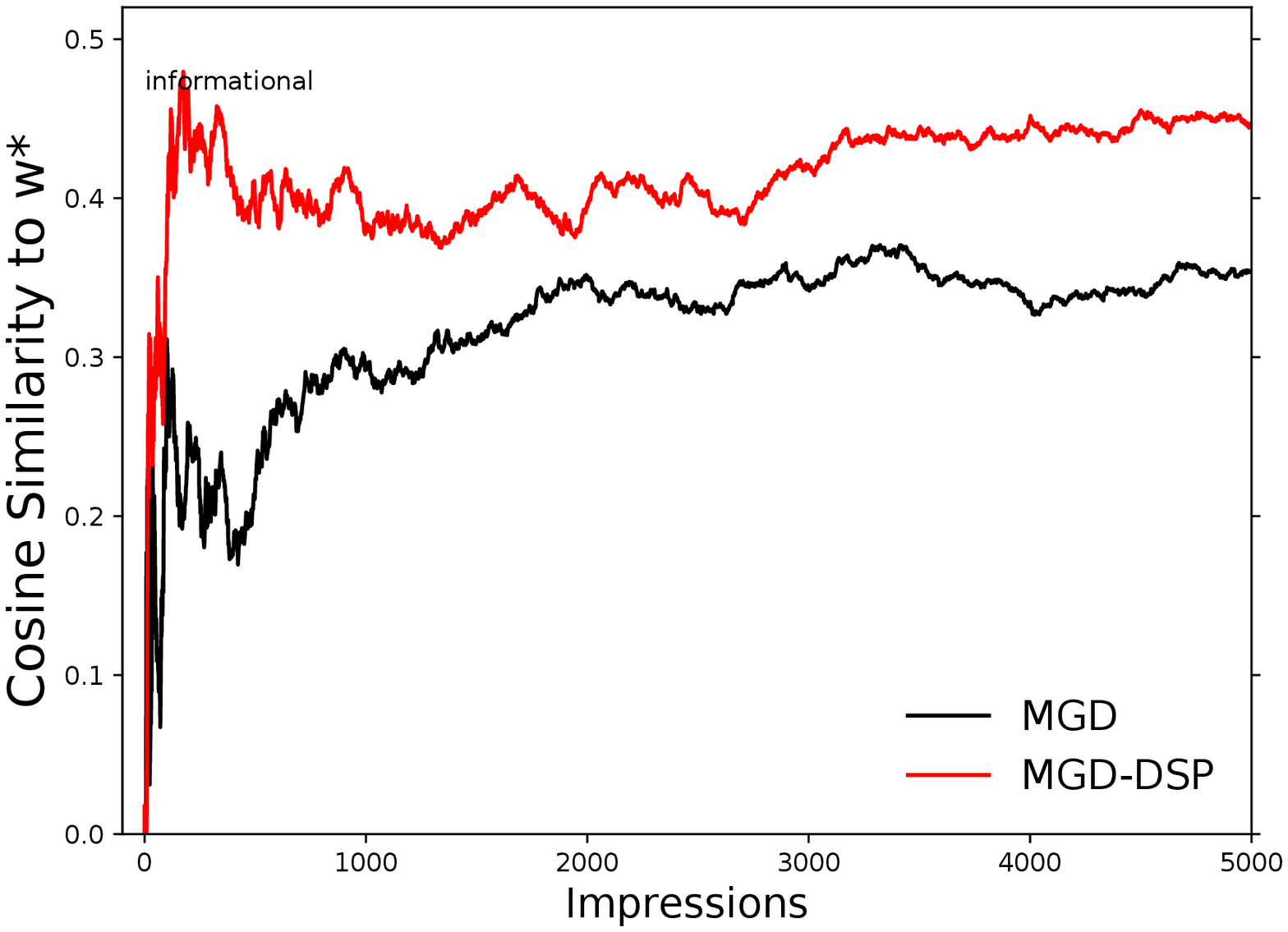} &
\includegraphics[width=4.2cm]{./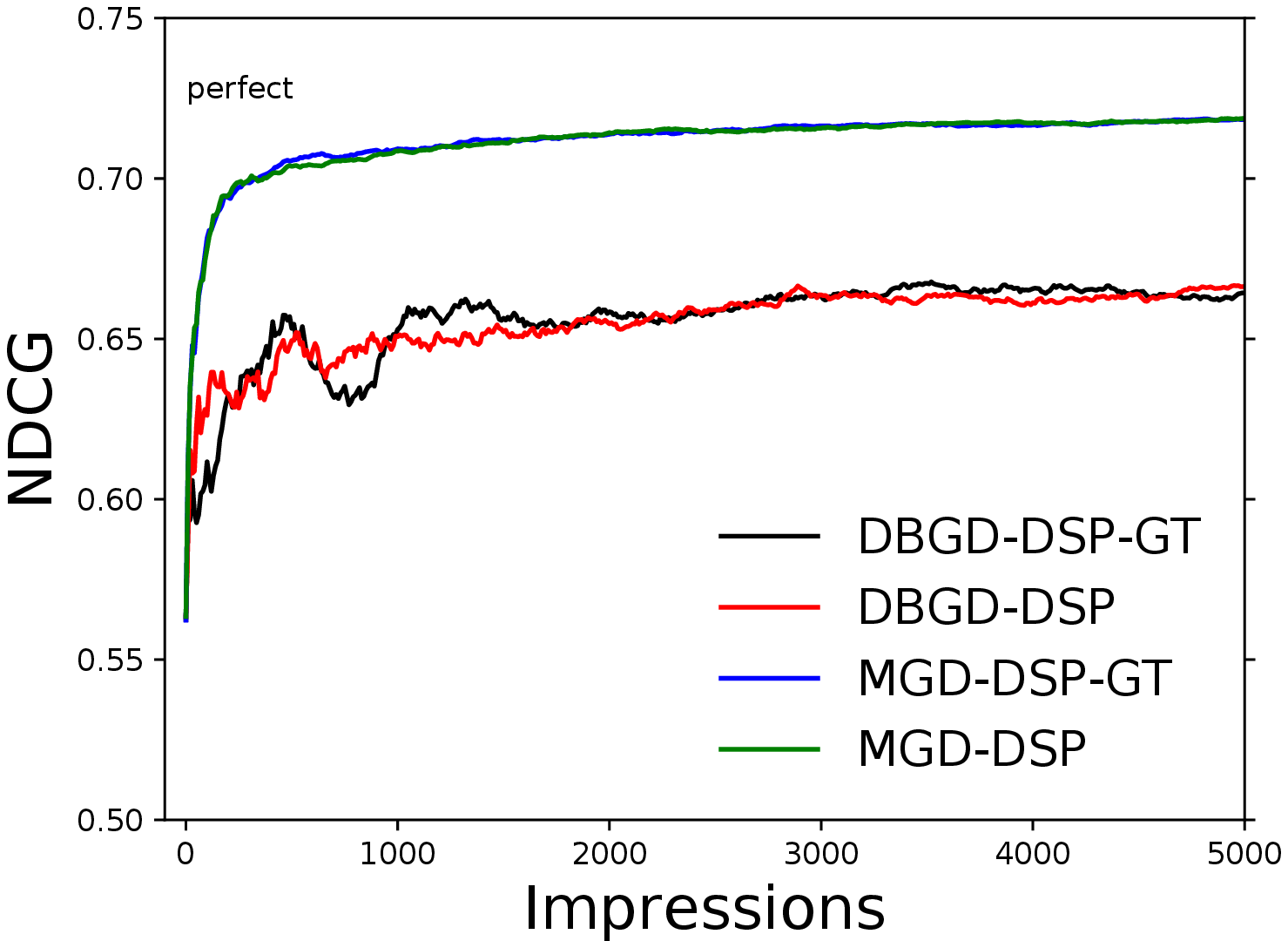}\\
(a) Online performance of linearly interpolating $u_t$ and its projection $g_t$ & (b) Offline performance of linearly interpolating $u_t$ and its projection $g_t$ &(c) Cosine similarity between offline best model $w^*$ and online model &(d) Comparing with ground-truth document space\\
\end{tabular}
\vspace{-2mm}
\caption{Analyzing Document Space Projection.} \label{Fig:analyze}
\vspace{-3mm}
\end{figure*}

\begin{figure}[ht]
\centering
\vspace{-2mm}
\setlength\tabcolsep{4pt}
\begin{tabular}{ >{\centering\arraybackslash}m{4.2cm} >{\centering\arraybackslash}m{4.2cm}}

\hspace*{-3mm}
\includegraphics[width=4.4cm]{./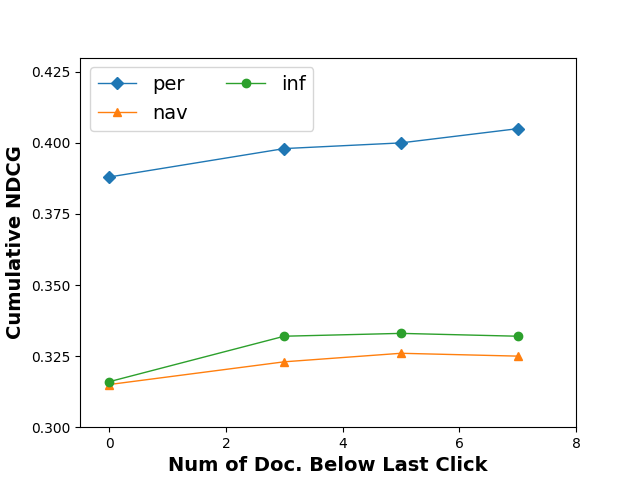} &
\includegraphics[width=4.4cm]{./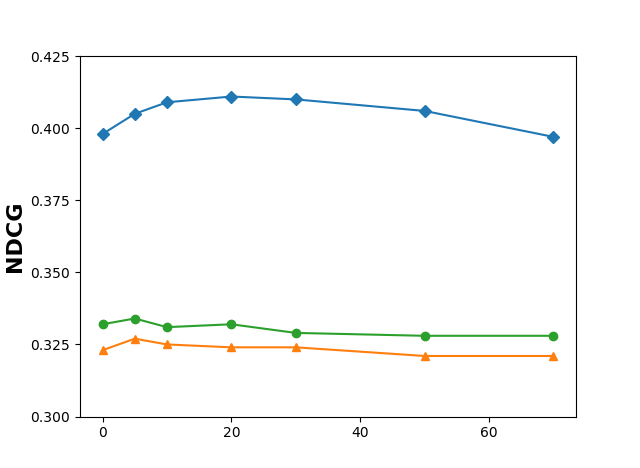} \\
(a) Including $k$ documents following last clicked position &  (b) Including $r$ recently examined documents  \\
\end{tabular}
\vspace{-2mm}
\caption{Hyper-parameter tuning for Document Space Projection.} \label{Fig:param}
\vspace{-4mm}
\end{figure}

\subsection{
Performance of Document Space Projection}
\label{sec_exp_dsp}
We begin our experimental analysis by answering our first evaluation question. We compared all algorithms over 3 click models and 5 datasets. We set the hyper-parameters of DBGD, MGD and NSGD according to their original papers. Following \cite{yue2009interactively,schuth2016multileave}, we set the exploration step size $\delta$ to 1 and learning rate $\alpha$ to 0.1. Both MGD and NSGD explore 9 proposal directions in one iteration. For our document space projection method, we consider $k = 3$ documents following the last clicked position as examined documents, and add $r = 10$ recently examined documents into document space $S_t$. We use SVD to solve for orthonormal basis $V_t$ of the  document space $S_t$, and compute the projection matrix by $A_t = V_t V_t^\top$.

We reported the offline NDCG@10 and online cumulative NDCG @10 after 10,000 iterations in Table \ref{Tab:online} and Table \ref{Tab:offline}. Due to space limit, we only reported the offline performance during the 10,000 iterations over 3 click models on Yahoo dataset, a large-scale real-world L2R dataset with 700 ranking features, in Figure \ref{Fig:offline}. MGD improves the online performance over DBGD by exploring multiple rankers simultaneously, and NSGD further improves over MGD by exploring gradients in a constrained subspace, as shown in Table \ref{Tab:online}. We observe that our proposed document space projection method consistently improves the online performance of all baseline algorithms. Recall that in Section \ref{sec:regret} our theoretical analysis suggested that document space projection reduces both the gradient estimation variance and the regret (online performance) with respect to the ratio between the rank of document space and feature dimension. Correspondingly, we observe that indeed we improved the OL2R models' ranking performance significantly over MSLR-WEB10K and Yahoo datasets, which are collected from real-world commerical search engines and have much higher feature dimensions (130 and 700 respectively). 
This result demonstrates the potential of document space projection to improve large-scale real-world DBGD-type OL2R applications with high-dimensional ranking features, as our algorithm attains satisfactory performance earlier than other baseline OL2R algorithms measured by online NDCG@10. 
We also notice that the standard deviation of those models' ranking performance is reduced when applying document space projection, which confirms our analysis of variance reduction in Lemma \ref{lemma:variance}.

From Figure \ref{Fig:offline} and Table \ref{Tab:offline} we notice that document space projection mostly improves offline performance over baseline algorithms. Figure \ref{Fig:offline} shows that document space projection significantly accelerates the convergence rate over the baseline algorithms, because of the reduced variance in gradient estimation. We also observe that applying document space projection under the perfect click model may lead to degraded performance, for example DBGD on MQ2007 and Yahoo dataset. This is because document space projection guarantees an unbiased gradient estimation under the assumption of known result examinations, as discussed in Section \ref{sec:unbias}. However, since in practice a user's result examination is unobserved, we approximated the examined documents by including all documents before the last clicked position and $k$ additional documents after the last clicked position. The perfect click model is an ideal case that users' stop probability is set to $0.0$ (see Table \ref{tab:clickmodel}) and every document is examined. Here, the document space needs to include all displayed documents to guarantee the unbiasedness, which requires a significantly larger $k$ compared to the $k$ used for navigational and informational click models.  
We argue that in practice since users only examine a handful of documents, we could well-approximate the examined documents with a reasonable choice of $k$. More sophisticated click models can also be introduced. We will analyze the effect of $k$ in Section \ref{sec:analyze}. In addition, we also observe that under informational click model the performance of NSGD-DSP is slightly decreased compared with original NSGD over three datasets. Note that since NSGD does not guarantee its gradient exploration is unbiased, further projecting its gradient may also lead to a biased gradient update and thus a sub-optimal model.


\subsection{Analysis of Document Space Projection}\label{sec:analyze}
To answer the second evaluation question, we design two experiments to show the effectiveness of document space projected gradient. In the first experiment, we study the utility of document space projected gradient. We compare the ranking performance of linearly interpolating the unrectified direction $u_t$ and its document space projected version $g_t$, i.e., $\lambda g_t + (1-\lambda)u_t$, based on the MGD algorithm on MSLR-WEB10K dataset. Similar observations were obtained on other datasets, but due to space limit we have to omit those detailed results. We report the online and offline performance by varying $\lambda$ from $0$ (which is equivalent to the original MGD algorithm) and $1$ (which is MGD-DSP) in Figure \ref{Fig:analyze} (a) and (b). We can clearly observe a trend of increasing online performance over all three click models when we increase $\lambda$, i.e., trust more on the projected direction $g_t$ for model update. This confirms the effectiveness of the projected direction $g_t$ within document space comparing with the unrectified direction $u_t$ from the entire parameter space. The offline performance is generally robust to the setting of $\lambda$ for navigational and information click models. This is expected since both MGD and MGD-DSP are unbiased and will eventually converge to similar offline performance after sufficiently large number of iterations (we had 10,000 iterations in our experiments).

In the second experiment, we trained an offline LambdaRank model \cite{burges2010ranknet} using the complete annotated relevance labels in the large-scale MSLR-WEB10K dataset. Then given this $w^*$, we compared cosine similarity between the online estimated model parameters with and without DSP in each iteration using MGD as the baseline. We show the result of first 5,000 iterations.  In Figure \ref{Fig:analyze} (c) we can observe that MGD-DSP converges faster and better to $w^*$ than MGD. This suggests the rectified gradient is more effective than the original one. We also compared with an oracle algorithm that knows the ground-truth examined documents, denoted as DSP-GT, to validate the effectiveness of our approximated document space. We show the result on DBGD and MGD under the perfect click model in Figure \ref{Fig:analyze}(d). We notice that oracle algorithms performed similarly as our proposed algorithm with an approximated document space, which confirms the effectiveness of the approximation heuristics.

To answer the third evaluation question, we compare different hyper-parameters used for constructing the document space on MSLR-WEB10K dataset. We vary $k$ from 0 to 7 and report the result in Figure\ref{Fig:param} (a). We notice that for navigational and informational click models, a relatively small $k$ achieved the best performance, i.e., $k=3$. This corresponds to the observation that users do not continue to examine many documents after their last click under these two click models. However, under the perfect click model, the models' performance increases with a larger $k$. This aligns with the conclusions from our discussion in Section \ref{sec_exp_dsp} that under the perfect click model, we need to set a much larger $k$ to accurately construct the document space and guarantee an unbiased gradient estimate.

In Figure \ref{Fig:param}(b), we vary $r$. As we discussed in  Section \ref{sec:practice}, we are motivated to add recently examined documents to compensate for potentially overlooked examined documents in the current query. The effect of different choices of $r$ is more noticeable under the perfect click model. This echoes our analysis above that under the perfect click model some examined documents may be overlooked when $k$ is not large enough. Thus correctly setting up $r$ could reduce the bias in document space construction and compensate the final performance. From the result figure, we notice that setting $r=20$ provides the best result. Under navigational and informational click models, the algorithm is generally robust to the choice of $r$. This is because the approximations of examined documents are already accurate with a reasonable setting of $k$. 

\section{Conclusion}
In this paper, we propose and develop the Document Space Projection (DSP) method for reducing variance in gradient estimation and improving online learning to rank performance. The key insight of DSP is to recognize that the interleaved test only reveals the projection of true gradient on the spanned space of \emph{examined} documents. Including anything beyond this space for model update only introduces noise. Thus our method projects the selected model update direction back to the document space to reduce its variance. We proved that DSP maintains an unbiased gradient estimate, and it can substantially improve the regret bound for DBGD-style algorithms via the reduced variance. Through our extensive experiments, we found that DSP is able to provide statistically significant improvements to several best performing DBGD-type OL2R models, both in terms of variance reduction and overall performance, especially when the number of ranking features is large. 

Currently, we are using a heuristic method to construct the document space. However, we did observe that the performance of DSP varies under different click models for simulated user click feedback, i.e., different underlying examination behaviors. As for our future work, we plan to incorporate different click modeling solutions for more accurate document space construction. It would also be meaningful to study how to perform document space based exploratory direction generation, before the interleaved test. Exploratory direction pre-selection is expected to further accelerate the gradient exploration and improve user satisfaction during online learning, but we also need to ensure it is unbiased. We leave the exploration of our projection-based solution to other types of OL2R algorithms as another future work.

\section*{Acknowledgments}
We thank the anonymous reviewers for their insightful comments. This work was supported in part by National Science Foundation Grant IIS-1553568 and IIS-1618948 and Bloomberg Data Science Ph.D. Fellowship.

\bibliographystyle{ACM-Reference-Format}
\balance
\bibliography{sample-bibliography}


\begin{thebibliography}{28}


\ifx \showCODEN    \undefined \def \showCODEN     #1{\unskip}     \fi
\ifx \showDOI      \undefined \def \showDOI       #1{#1}\fi
\ifx \showISBNx    \undefined \def \showISBNx     #1{\unskip}     \fi
\ifx \showISBNxiii \undefined \def \showISBNxiii  #1{\unskip}     \fi
\ifx \showISSN     \undefined \def \showISSN      #1{\unskip}     \fi
\ifx \showLCCN     \undefined \def \showLCCN      #1{\unskip}     \fi
\ifx \shownote     \undefined \def \shownote      #1{#1}          \fi
\ifx \showarticletitle \undefined \def \showarticletitle #1{#1}   \fi
\ifx \showURL      \undefined \def \showURL       {\relax}        \fi
\providecommand\bibfield[2]{#2}
\providecommand\bibinfo[2]{#2}
\providecommand\natexlab[1]{#1}
\providecommand\showeprint[2][]{arXiv:#2}

\bibitem[\protect\citeauthoryear{Burges}{Burges}{2010}]%
        {burges2010ranknet}
\bibfield{author}{\bibinfo{person}{Christopher~JC Burges}.}
  \bibinfo{year}{2010}\natexlab{}.
\newblock \showarticletitle{From ranknet to lambdarank to lambdamart: An
  overview}.
\newblock \bibinfo{journal}{\emph{Learning}} \bibinfo{volume}{11},
  \bibinfo{number}{23-581} (\bibinfo{year}{2010}), \bibinfo{pages}{81}.
\newblock


\bibitem[\protect\citeauthoryear{Chapelle and Chang}{Chapelle and
  Chang}{2011}]%
        {chapelle2011yahoo}
\bibfield{author}{\bibinfo{person}{Olivier Chapelle} {and} \bibinfo{person}{Yi
  Chang}.} \bibinfo{year}{2011}\natexlab{}.
\newblock \showarticletitle{Yahoo! learning to rank challenge overview}. In
  \bibinfo{booktitle}{\emph{Proceedings of the Learning to Rank Challenge}}.
  \bibinfo{pages}{1--24}.
\newblock


\bibitem[\protect\citeauthoryear{Chapelle and Zhang}{Chapelle and
  Zhang}{2009}]%
        {chapelle2009dynamic}
\bibfield{author}{\bibinfo{person}{Olivier Chapelle} {and} \bibinfo{person}{Ya
  Zhang}.} \bibinfo{year}{2009}\natexlab{}.
\newblock \showarticletitle{A dynamic bayesian network click model for web
  search ranking}. In \bibinfo{booktitle}{\emph{Proceedings of the 18th
  international conference on World wide web}}. ACM, \bibinfo{pages}{1--10}.
\newblock


\bibitem[\protect\citeauthoryear{Craswell, Zoeter, Taylor, and Ramsey}{Craswell
  et~al\mbox{.}}{2008}]%
        {craswell2008experimental}
\bibfield{author}{\bibinfo{person}{Nick Craswell}, \bibinfo{person}{Onno
  Zoeter}, \bibinfo{person}{Michael Taylor}, {and} \bibinfo{person}{Bill
  Ramsey}.} \bibinfo{year}{2008}\natexlab{}.
\newblock \showarticletitle{An experimental comparison of click position-bias
  models}. In \bibinfo{booktitle}{\emph{Proceedings of the 2008 international
  conference on web search and data mining}}. ACM, \bibinfo{pages}{87--94}.
\newblock


\bibitem[\protect\citeauthoryear{Flaxman, Kalai, and McMahan}{Flaxman
  et~al\mbox{.}}{2005}]%
        {flaxman2005online}
\bibfield{author}{\bibinfo{person}{Abraham~D Flaxman},
  \bibinfo{person}{Adam~Tauman Kalai}, {and} \bibinfo{person}{H~Brendan
  McMahan}.} \bibinfo{year}{2005}\natexlab{}.
\newblock \showarticletitle{Online convex optimization in the bandit setting:
  gradient descent without a gradient}. In
  \bibinfo{booktitle}{\emph{Proceedings of the sixteenth annual ACM-SIAM
  symposium on Discrete algorithms}}. Society for Industrial and Applied
  Mathematics, \bibinfo{pages}{385--394}.
\newblock


\bibitem[\protect\citeauthoryear{Grotov and de~Rijke}{Grotov and
  de~Rijke}{2016}]%
        {grotov2016online}
\bibfield{author}{\bibinfo{person}{Artem Grotov} {and} \bibinfo{person}{Maarten
  de Rijke}.} \bibinfo{year}{2016}\natexlab{}.
\newblock \showarticletitle{Online learning to rank for information retrieval:
  SIGIR 2016 Tutorial}. In \bibinfo{booktitle}{\emph{Proceedings of the 39th
  International ACM SIGIR conference on Research and Development in Information
  Retrieval}}. ACM, \bibinfo{pages}{1215--1218}.
\newblock


\bibitem[\protect\citeauthoryear{Hofmann, Schuth, Whiteson, and
  de~Rijke}{Hofmann et~al\mbox{.}}{2013}]%
        {hofmann2013reusing}
\bibfield{author}{\bibinfo{person}{Katja Hofmann}, \bibinfo{person}{Anne
  Schuth}, \bibinfo{person}{Shimon Whiteson}, {and} \bibinfo{person}{Maarten de
  Rijke}.} \bibinfo{year}{2013}\natexlab{}.
\newblock \showarticletitle{Reusing historical interaction data for faster
  online learning to rank for IR}. In \bibinfo{booktitle}{\emph{Proceedings of
  the sixth ACM international conference on WSDM}}. ACM,
  \bibinfo{pages}{183--192}.
\newblock


\bibitem[\protect\citeauthoryear{Hofmann, Whiteson, and De~Rijke}{Hofmann
  et~al\mbox{.}}{2011}]%
        {hofmann2011probabilistic}
\bibfield{author}{\bibinfo{person}{Katja Hofmann}, \bibinfo{person}{Shimon
  Whiteson}, {and} \bibinfo{person}{Maarten De~Rijke}.}
  \bibinfo{year}{2011}\natexlab{}.
\newblock \showarticletitle{A probabilistic method for inferring preferences
  from clicks}. In \bibinfo{booktitle}{\emph{Proceedings of the 20th ACM
  international conference on Information and knowledge management}}. ACM,
  \bibinfo{pages}{249--258}.
\newblock


\bibitem[\protect\citeauthoryear{Joachims, Granka, Pan, Hembrooke, and
  Gay}{Joachims et~al\mbox{.}}{2017}]%
        {joachims2017accurately}
\bibfield{author}{\bibinfo{person}{Thorsten Joachims}, \bibinfo{person}{Laura
  Granka}, \bibinfo{person}{Bing Pan}, \bibinfo{person}{Helene Hembrooke},
  {and} \bibinfo{person}{Geri Gay}.} \bibinfo{year}{2017}\natexlab{}.
\newblock \showarticletitle{Accurately interpreting clickthrough data as
  implicit feedback}. In \bibinfo{booktitle}{\emph{ACM SIGIR Forum}},
  Vol.~\bibinfo{volume}{51}. Acm, \bibinfo{pages}{4--11}.
\newblock


\bibitem[\protect\citeauthoryear{Katariya, Kveton, Szepesvari, and
  Wen}{Katariya et~al\mbox{.}}{2016}]%
        {katariya2016dcm}
\bibfield{author}{\bibinfo{person}{Sumeet Katariya}, \bibinfo{person}{Branislav
  Kveton}, \bibinfo{person}{Csaba Szepesvari}, {and} \bibinfo{person}{Zheng
  Wen}.} \bibinfo{year}{2016}\natexlab{}.
\newblock \showarticletitle{DCM bandits: Learning to rank with multiple
  clicks}. In \bibinfo{booktitle}{\emph{International Conference on Machine
  Learning}}. \bibinfo{pages}{1215--1224}.
\newblock


\bibitem[\protect\citeauthoryear{Kveton, Szepesvari, Wen, and Ashkan}{Kveton
  et~al\mbox{.}}{2015}]%
        {kveton2015cascading}
\bibfield{author}{\bibinfo{person}{Branislav Kveton}, \bibinfo{person}{Csaba
  Szepesvari}, \bibinfo{person}{Zheng Wen}, {and} \bibinfo{person}{Azin
  Ashkan}.} \bibinfo{year}{2015}\natexlab{}.
\newblock \showarticletitle{Cascading bandits: Learning to rank in the cascade
  model}. In \bibinfo{booktitle}{\emph{Proceedings of the 32nd International
  Conference on Machine Learning (ICML-15)}}. \bibinfo{pages}{767--776}.
\newblock


\bibitem[\protect\citeauthoryear{Liu et~al\mbox{.}}{Liu et~al\mbox{.}}{2009}]%
        {liu2009learning}
\bibfield{author}{\bibinfo{person}{Tie-Yan Liu} {et~al\mbox{.}}}
  \bibinfo{year}{2009}\natexlab{}.
\newblock \showarticletitle{Learning to rank for information retrieval}.
\newblock \bibinfo{journal}{\emph{Foundations and Trends{\textregistered} in
  Information Retrieval}} \bibinfo{volume}{3}, \bibinfo{number}{3}
  (\bibinfo{year}{2009}), \bibinfo{pages}{225--331}.
\newblock


\bibitem[\protect\citeauthoryear{Liu, Xu, Qin, Xiong, and Li}{Liu
  et~al\mbox{.}}{2007}]%
        {liu2007letor}
\bibfield{author}{\bibinfo{person}{Tie-Yan Liu}, \bibinfo{person}{Jun Xu},
  \bibinfo{person}{Tao Qin}, \bibinfo{person}{Wenying Xiong}, {and}
  \bibinfo{person}{Hang Li}.} \bibinfo{year}{2007}\natexlab{}.
\newblock \showarticletitle{Letor: Benchmark dataset for research on learning
  to rank for information retrieval}. In \bibinfo{booktitle}{\emph{Proceedings
  of SIGIR 2007 workshop on learning to rank for information retrieval}},
  Vol.~\bibinfo{volume}{310}.
\newblock


\bibitem[\protect\citeauthoryear{Oosterhuis and de~Rijke}{Oosterhuis and
  de~Rijke}{2017}]%
        {oosterhuis2017balancing}
\bibfield{author}{\bibinfo{person}{Harrie Oosterhuis} {and}
  \bibinfo{person}{Maarten de Rijke}.} \bibinfo{year}{2017}\natexlab{}.
\newblock \showarticletitle{Balancing Speed and Quality in Online Learning to
  Rank for Information Retrieval}. In \bibinfo{booktitle}{\emph{Proceedings of
  the 2017 ACM CIKM}}. ACM, \bibinfo{pages}{277--286}.
\newblock


\bibitem[\protect\citeauthoryear{Oosterhuis and de~Rijke}{Oosterhuis and
  de~Rijke}{2018}]%
        {Oosterhuis_2018}
\bibfield{author}{\bibinfo{person}{Harrie Oosterhuis} {and}
  \bibinfo{person}{Maarten de Rijke}.} \bibinfo{year}{2018}\natexlab{}.
\newblock \showarticletitle{Differentiable Unbiased Online Learning to Rank}.
\newblock \bibinfo{journal}{\emph{Proceedings of the 27th ACM International
  Conference on Information and Knowledge Management - CIKM ’18}}
  (\bibinfo{year}{2018}).
\newblock
\showISBNx{9781450360142}
\urldef\tempurl%
\url{https://doi.org/10.1145/3269206.3271686}
\showDOI{\tempurl}


\bibitem[\protect\citeauthoryear{Qin and Liu}{Qin and Liu}{2013}]%
        {qin2013introducing}
\bibfield{author}{\bibinfo{person}{Tao Qin} {and} \bibinfo{person}{Tie-Yan
  Liu}.} \bibinfo{year}{2013}\natexlab{}.
\newblock \bibinfo{title}{Introducing LETOR 4.0 Datasets}.
\newblock
\newblock
\showeprint[arxiv]{cs.IR/1306.2597}


\bibitem[\protect\citeauthoryear{Radlinski, Kurup, and Joachims}{Radlinski
  et~al\mbox{.}}{2008}]%
        {radlinski2008does}
\bibfield{author}{\bibinfo{person}{Filip Radlinski}, \bibinfo{person}{Madhu
  Kurup}, {and} \bibinfo{person}{Thorsten Joachims}.}
  \bibinfo{year}{2008}\natexlab{}.
\newblock \showarticletitle{How does clickthrough data reflect retrieval
  quality?}. In \bibinfo{booktitle}{\emph{Proceedings of the 17th ACM CIKM}}.
  ACM, \bibinfo{pages}{43--52}.
\newblock


\bibitem[\protect\citeauthoryear{Schuth, Hofmann, Whiteson, and
  de~Rijke}{Schuth et~al\mbox{.}}{2013}]%
        {schuth2013lerot}
\bibfield{author}{\bibinfo{person}{Anne Schuth}, \bibinfo{person}{Katja
  Hofmann}, \bibinfo{person}{Shimon Whiteson}, {and} \bibinfo{person}{Maarten
  de Rijke}.} \bibinfo{year}{2013}\natexlab{}.
\newblock \showarticletitle{Lerot: An online learning to rank framework}. In
  \bibinfo{booktitle}{\emph{Proceedings of the 2013 workshop on Living labs for
  information retrieval evaluation}}. ACM, \bibinfo{pages}{23--26}.
\newblock


\bibitem[\protect\citeauthoryear{Schuth, Oosterhuis, Whiteson, and
  de~Rijke}{Schuth et~al\mbox{.}}{2016}]%
        {schuth2016multileave}
\bibfield{author}{\bibinfo{person}{Anne Schuth}, \bibinfo{person}{Harrie
  Oosterhuis}, \bibinfo{person}{Shimon Whiteson}, {and}
  \bibinfo{person}{Maarten de Rijke}.} \bibinfo{year}{2016}\natexlab{}.
\newblock \showarticletitle{Multileave gradient descent for fast online
  learning to rank}. In \bibinfo{booktitle}{\emph{Proceedings of the Ninth ACM
  International Conference on WSDM}}. ACM, \bibinfo{pages}{457--466}.
\newblock


\bibitem[\protect\citeauthoryear{Schuth, Sietsma, Whiteson, Lefortier, and
  de~Rijke}{Schuth et~al\mbox{.}}{2014}]%
        {schuth2014multileaved}
\bibfield{author}{\bibinfo{person}{Anne Schuth}, \bibinfo{person}{Floor
  Sietsma}, \bibinfo{person}{Shimon Whiteson}, \bibinfo{person}{Damien
  Lefortier}, {and} \bibinfo{person}{Maarten de Rijke}.}
  \bibinfo{year}{2014}\natexlab{}.
\newblock \showarticletitle{Multileaved comparisons for fast online
  evaluation}. In \bibinfo{booktitle}{\emph{Proceedings of the 23rd ACM CIKM}}.
  ACM, \bibinfo{pages}{71--80}.
\newblock


\bibitem[\protect\citeauthoryear{Voorhees, Harman, et~al\mbox{.}}{Voorhees
  et~al\mbox{.}}{2005}]%
        {voorhees2005trec}
\bibfield{author}{\bibinfo{person}{Ellen~M Voorhees}, \bibinfo{person}{Donna~K
  Harman}, {et~al\mbox{.}}} \bibinfo{year}{2005}\natexlab{}.
\newblock \bibinfo{booktitle}{\emph{TREC: Experiment and evaluation in
  information retrieval}}. Vol.~\bibinfo{volume}{1}.
\newblock \bibinfo{publisher}{MIT press Cambridge}.
\newblock


\bibitem[\protect\citeauthoryear{Wang, Langley, Kim, McCord-Snook, and
  Wang}{Wang et~al\mbox{.}}{2018}]%
        {wang2018efficient}
\bibfield{author}{\bibinfo{person}{Huazheng Wang}, \bibinfo{person}{Ramsey
  Langley}, \bibinfo{person}{Sonwoo Kim}, \bibinfo{person}{Eric McCord-Snook},
  {and} \bibinfo{person}{Hongning Wang}.} \bibinfo{year}{2018}\natexlab{}.
\newblock \showarticletitle{Efficient exploration of gradient space for online
  learning to rank}. In \bibinfo{booktitle}{\emph{The 41st International ACM
  SIGIR Conference on Research and Development in Information Retrieval}}. ACM,
  \bibinfo{pages}{145--154}.
\newblock


\bibitem[\protect\citeauthoryear{Wang, Bendersky, Metzler, and Najork}{Wang
  et~al\mbox{.}}{2016}]%
        {wang2016learning}
\bibfield{author}{\bibinfo{person}{Xuanhui Wang}, \bibinfo{person}{Michael
  Bendersky}, \bibinfo{person}{Donald Metzler}, {and} \bibinfo{person}{Marc
  Najork}.} \bibinfo{year}{2016}\natexlab{}.
\newblock \showarticletitle{Learning to rank with selection bias in personal
  search}. In \bibinfo{booktitle}{\emph{Proceedings of the 39th International
  ACM SIGIR conference on Research and Development in Information Retrieval}}.
  ACM, \bibinfo{pages}{115--124}.
\newblock


\bibitem[\protect\citeauthoryear{Yue, Broder, Kleinberg, and Joachims}{Yue
  et~al\mbox{.}}{2012}]%
        {yue2012k}
\bibfield{author}{\bibinfo{person}{Yisong Yue}, \bibinfo{person}{Josef Broder},
  \bibinfo{person}{Robert Kleinberg}, {and} \bibinfo{person}{Thorsten
  Joachims}.} \bibinfo{year}{2012}\natexlab{}.
\newblock \showarticletitle{The k-armed dueling bandits problem}.
\newblock \bibinfo{journal}{\emph{J. Comput. System Sci.}}
  \bibinfo{volume}{78}, \bibinfo{number}{5} (\bibinfo{year}{2012}),
  \bibinfo{pages}{1538--1556}.
\newblock


\bibitem[\protect\citeauthoryear{Yue, Gao, Chapelle, Zhang, and Joachims}{Yue
  et~al\mbox{.}}{2010}]%
        {yue2010learning}
\bibfield{author}{\bibinfo{person}{Yisong Yue}, \bibinfo{person}{Yue Gao},
  \bibinfo{person}{Oliver Chapelle}, \bibinfo{person}{Ya Zhang}, {and}
  \bibinfo{person}{Thorsten Joachims}.} \bibinfo{year}{2010}\natexlab{}.
\newblock \showarticletitle{Learning more powerful test statistics for
  click-based retrieval evaluation}. In \bibinfo{booktitle}{\emph{Proceedings
  of the 33rd international ACM SIGIR conference on Research and development in
  information retrieval}}. ACM, \bibinfo{pages}{507--514}.
\newblock


\bibitem[\protect\citeauthoryear{Yue and Joachims}{Yue and Joachims}{2009}]%
        {yue2009interactively}
\bibfield{author}{\bibinfo{person}{Yisong Yue} {and} \bibinfo{person}{Thorsten
  Joachims}.} \bibinfo{year}{2009}\natexlab{}.
\newblock \showarticletitle{Interactively optimizing information retrieval
  systems as a dueling bandits problem}. In
  \bibinfo{booktitle}{\emph{Proceedings of the 26th Annual International
  Conference on Machine Learning}}. ACM, \bibinfo{pages}{1201--1208}.
\newblock


\bibitem[\protect\citeauthoryear{Zhao and King}{Zhao and King}{2016}]%
        {zhao2016constructing}
\bibfield{author}{\bibinfo{person}{Tong Zhao} {and} \bibinfo{person}{Irwin
  King}.} \bibinfo{year}{2016}\natexlab{}.
\newblock \showarticletitle{Constructing reliable gradient exploration for
  online learning to rank}. In \bibinfo{booktitle}{\emph{Proceedings of the
  25th ACM International on Conference on Information and Knowledge
  Management}}. ACM, \bibinfo{pages}{1643--1652}.
\newblock


\bibitem[\protect\citeauthoryear{Zoghi, Tunys, Ghavamzadeh, Kveton, Szepesvari,
  and Wen}{Zoghi et~al\mbox{.}}{2017}]%
        {zoghi2017online}
\bibfield{author}{\bibinfo{person}{Masrour Zoghi}, \bibinfo{person}{Tomas
  Tunys}, \bibinfo{person}{Mohammad Ghavamzadeh}, \bibinfo{person}{Branislav
  Kveton}, \bibinfo{person}{Csaba Szepesvari}, {and} \bibinfo{person}{Zheng
  Wen}.} \bibinfo{year}{2017}\natexlab{}.
\newblock \showarticletitle{Online Learning to Rank in Stochastic Click
  Models}. In \bibinfo{booktitle}{\emph{International Conference on Machine
  Learning}}. \bibinfo{pages}{4199--4208}.
\newblock


\end{thebibliography}

\end{document}